\newcounter{algorithmicH}
\let\oldalgorithmic\algorithmic
\renewcommand{\algorithmic}{%
  \stepcounter{algorithmicH}
  \oldalgorithmic}
\renewcommand{\theHALG@line}{ALG@line.\thealgorithmicH.\arabic{ALG@line}}
\def\bE{\mathbb{E}}
\def\bR{\mathbb{R}}
\def\bN{\mathbb{N}}
\def\cD{\mathcal{D}}
\def\cI{\mathcal{I}}
\def\cM{\mathcal{M}}
\renewcommand{\b}{{\bm{b}}}
\newcommand{\f}{\bm{f}}
\newcommand{\blam}{\bm{\lambda}}
\newcommand{\st}{S}
\def\wt{\widetilde}
\renewcommand{\setminus}{\backslash}
\newcommand{\rom}[1]{\textup{\uppercase\expandafter{\romannumeral#1}}}
\newtheorem{theorem}{Theorem}
\newtheorem{lemma}[theorem]{Lemma}
\newtheorem{proposition}[theorem]{Proposition}
\newtheorem{claim}[theorem]{Claim}
\newtheorem{observation}[theorem]{Observation}
\theoremstyle{definition}
\newtheorem{definition}[theorem]{Definition}
\newtheorem{remark}[theorem]{Remark}
\begin{document}
\title{Fast Sampling of $b$-Matchings and $b$-Edge Covers}
 \author{Zongchen Chen\thanks{\texttt{zongchen@mit.edu}. MIT.}
 \and
 Yuzhou Gu\thanks{\texttt{yuzhougu@mit.edu}. MIT.}}
\date{}

\maketitle

\begin{abstract}
For an integer $b \ge 1$, a $b$-matching (resp.~$b$-edge cover) of a graph $G=(V,E)$ is a subset $S\subseteq E$ of edges such that every vertex is incident with at most (resp.~at least) $b$ edges from $S$.
We prove that for \emph{any} $b \ge 1$ the simple Glauber dynamics for sampling (weighted) $b$-matchings and $b$-edge covers mixes in $O(n\log n)$ time on all $n$-vertex bounded-degree graphs. This significantly improves upon previous results which have worse running time and only work for $b$-matchings with $b \le 7$ and for $b$-edge covers with $b \le 2$.


More generally, we prove \emph{spectral independence} for a broad class of binary symmetric Holant problems with \emph{log-concave} signatures, including $b$-matchings, $b$-edge covers, and antiferromagnetic $2$-spin edge models.
We hence deduce optimal mixing time of the Glauber dynamics from spectral independence.

The core of our proof is a recursive coupling inspired by \cite{chen2023near} which upper bounds the Wasserstein $W_1$ distance between distributions under different pinnings. Using a similar method, we also obtain the optimal $O(n\log n)$ mixing time of the Glauber dynamics for the hardcore model on $n$-vertex bounded-degree claw-free graphs, for \emph{any} fugacity $\lambda$. This improves over previous works which have at least cubic dependence on $n$.

\end{abstract}

\section{Introduction} \label{sec:intro}
\subsection{\texorpdfstring{$b$}{b}-Matchings and \texorpdfstring{$b$}{b}-edge covers}

Let $G=(V,E)$ be a graph and $b\ge 1$ be an integer.
Let $E_v = \{e \in E: e \text{~incident to~} v\}$ be the set of all adjacent edges of a vertex $v \in V$.
A $b$-matching of $G$ is a subset $S \subseteq E$ of edges such that $|S\cap E_v| \le b$ for all $v \in V$.
When $b = 1$ this reduces to a usual matching of $G$.
We consider the problem of sampling random weighted $b$-matchings of a given graph $G$.
Write $\cM_b = \cM_b(G)$ for the collection of all $b$-matchings of $G$.
For $\lambda>0$, consider the Gibbs distribution $\mu = \mu_{G,b,\lambda}$ on $\cM_b$ given by
\[
\mu(S) := \frac{\lambda^{|S|}}{Z}, \quad \forall S \in \cM_b
\]
where $Z = Z_{G,b}(\lambda)$ is a normalization constant, known as the partition function, defined as
\[
Z := \sum_{S \in \cM_b} \lambda^{|S|}.
\]
Note that if $\lambda=1$ then $\mu$ is the uniform distribution over $\cM_b$ and $Z$ counts the total number of $b$-matchings in $\cM_b$.

For $b = 1$, namely the usual matchings, such a model is called the \emph{monomer-dimer model}.
Approximately counting and sampling matchings is a fundamental problem in theoretical computer science and also one of the first successful applications of Markov chain Monte Carlo (MCMC) methods in approximate sampling and counting combinatorial objects.
In a classical work \cite{JS89}, Jerrum and Sinclair proved rapid mixing of Glauber dynamics for sampling from the monomer-dimer model.
The besting mixing time result to date is $O(n^2m\log n)$ on arbitrary graphs where $m$ is the number of edges \cite{Jbook}, and only very recently this was improved to $O(n\log n)$ on all bounded-degree graphs \cite{chen2022spectral}.

For general $b \ge 1$, \cite{HLZ16} presented a polynomial-time algorithm for approximately sampling $b$-matchings on all graphs when $b \le 7$. Their algorithm is based on MCMC and they utilize the notion of \emph{windable functions} introduced in \cite{McQ13} to construct canonical paths and bound the spectral gap of the Markov chain.
However, as pointed out in \cite{HLZ16}, for $8$-matchings the associated constraint function is no longer windable under their characterization and hence their approach cannot work for $b \ge 8$.

Another closely related problem is sampling $b$-edge covers of a given graph.
A subset $S \subseteq E$ of edges is called a $b$-edge cover if every vertex is incident with at least $b$ edges, i.e., $|S\cap E_v| \ge b$ for every $v \in V$.
For $b=1$, i.e., the usual edge covers, the counting and sampling problems have been extensively studied as well \cite{BR09,LLL14,LLZ14,HLZ16,GLLZ21,BCR21,chen2022spectral}.
In particular, \cite{LLL14} presented a deterministic algorithm for counting unweighted edge covers for all graphs using the correlation decay approach with a running time $O(m^{1+\log_2 6}n^2)$, and this was later generalized to weighted edge covers in \cite{LLZ14}.
Deterministic algorithms based on the polynomial interpolation approach were also given for all bounded-degree graphs in \cite{GLLZ21,BCR21}.
More recently, it was shown in \cite{chen2022spectral} that the Glauber dynamics for sampling edge covers mixes in $O(n\log n)$ time on all bounded-degree graphs.

Meanwhile, the problem of sampling and counting $b$-edge covers for larger $b$ is far from clear.
The MCMC-based algorithm in \cite{HLZ16} can be applied to count $b$-edge covers for $b \le 2$, which only slightly extends the classical case of $b=1$. Similar to $b$-matchings, the approach from \cite{HLZ16} no longer works for $b\ge 3$ due to the failure of windability.

In this paper we attempt to answer the following question: Are there polynomial-time algorithms for approximately sampling/counting $b$-matchings and $b$-edge covers of a given graph for \emph{any} $b \ge 1$?
We give a positive answer to this question for all bounded-degree graphs.
More specifically, we show that the Glauber dynamics, a simple Markov chain for sampling $b$-matchings/$b$-edge covers, converges in $O(n\log n)$ time which is optimal.

One can simultaneously generalize both $b$-matchings and $b$-edge covers by assigning a different threshold to each vertex.
More specifically, let $\b = (b_v)_{v \in V} \in \bN^V$ be a vector of thresholds on all vertices.
We consider the collection $\cM_\b = \cM_\b(G)$ of generalized $\b$-matchings, defined as
\[
\cM_\b = \left\{ S \subseteq E: \forall v \in V, |S \cap E_v| \le b_v \right\}.
\]
For $\lambda>0$ the Gibbs distribution $\mu = \mu_{G,\b,\lambda}$ is given by
\[
\mu(S) := \frac{\lambda^{|S|}}{Z}, \quad \forall S \in \cM_\b
\]
and the partition function $Z = Z_{G,\b}(\lambda)$ is defined as
\[
Z := \sum_{S \in \cM_\b} \lambda^{|S|}.
\]
Thus, for uniform $\b = b\bm{1}$ where $\bm{1}$ is the all-ones vector we obtain $b$-matchings, and for $b_v = d_v - b$ where $d_v$ is the degree of $v$ we get the complements of $b$-edge covers.

Our main contribution is to establish rapid mixing of the Glauber dynamics for sampling general $\b$-matchings for any $\b \in \bN^V$ on all bounded-degree graphs.
In each step of the Glauber dynamics, one picks an edge $e \in E$ uniformly at random and updates its status, $e \in S$ (occupied) or $e \notin S$ (unoccupied), conditional on the configuration of all other edges; in particular, if including $e$ violates the subset $S$ being a $\b$-matching then $e$ must be unoccupied in this update.
It is easy to show that the Glauber dynamics is ergodic for sampling $\b$-matchings.

\begin{theorem}[$\b$-Matchings]
\label{thm:b-matching-mixing}
Let $\Delta \ge 3$ be an integer and $G=(V,E)$ be an $n$-vertex graph of maximum degree $\Delta$.
Let $\b \in \bN^V$ be a vector of vertex thresholds. 
Then for any $\lambda > 0$, the Gibbs distribution $\mu = \mu_{G,\b,\lambda}$ over $\b$-matchings is $O_{\Delta,\lambda}(1)$-spectrally independent.
Furthermore, the Glauber dynamics for sampling from $\mu$ mixes in $O_{\Delta,\lambda}(n \log n)$ time.
\end{theorem}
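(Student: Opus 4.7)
The plan is to reduce this theorem to the main general result advertised in the abstract: spectral independence for all binary symmetric Holant problems with log-concave vertex signatures. First I would cast $\mu_{G,\b,\lambda}$ in the Holant framework, using one Boolean variable per edge and, at each vertex $v$, a symmetric signature $f_v[k] = \ind[k \le b_v]$ for $k = 0, 1, \dots, d_v$, where $k$ is the number of incident edges that are ``on.'' This sequence is log-concave: away from $k = b_v$ the inequality $f_v[k]^2 \ge f_v[k-1]\,f_v[k+1]$ is trivial (both sides equal $1$ or both sides equal $0$), while at $k = b_v$ the left side equals $1$ and the right side vanishes. The edge weight $\lambda$ enters as a uniform external field and does not affect log-concavity at the vertices, so $\mu$ falls within the scope of the general theorem, which gives $\eta$-spectral independence of $\mu_{G,\b,\lambda}$ with $\eta = \eta(\Delta,\lambda) = O_{\Delta,\lambda}(1)$; this is the first assertion.

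The mixing-time conclusion then follows from the now-standard principle that $O(1)$-spectral independence together with bounded conditional marginals implies $O(n\log n)$ mixing of Glauber dynamics (as in the framework of Chen--Liu--Vigoda and its subsequent refinements). The class of $\b$-matching distributions is closed under pinning of edges: fixing the status of a subset of edges yields a $\b'$-matching distribution on the remaining edges, with $b'_v = b_v - (\text{number of pinned ``on'' edges incident to } v)$ when nonnegative, and infeasible otherwise. Hence the spectral-independence bound from the first part applies uniformly to every pinned sub-distribution. Marginal boundedness is immediate: conditional on any configuration of the other edges, an edge $e$ that is not deterministically forced has occupancy probability exactly $\lambda/(1+\lambda)$, which is bounded away from $0$ and $1$ in terms of $\lambda$ alone. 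These two inputs plug directly into the spectral-independence-to-mixing machinery and yield the claimed $O_{\Delta,\lambda}(n\log n)$ bound.

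The main technical obstacle is therefore the general spectral-independence theorem for log-concave Holant signatures, rather than the matching-specific reduction. I expect its proof to proceed via a Holant-adapted tree recursion on a self-avoiding-walk-like tree of edges, together with a carefully chosen potential function whose one-step contraction is controlled using log-concavity of the signatures. The delicate points will be handling the hard constraints (the zeros in the signature for $k > b_v$) uniformly in $\b$, and extracting a decay-of-correlations estimate whose constants depend only on $\Delta$ and $\lambda$ rather than on the particular thresholds or on the edge degrees; the fact that the class is stable under pinning (so the recursion closes on itself) should make the induction manageable.
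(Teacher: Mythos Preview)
Your reduction is exactly the paper's proof: observe that $f_v = [1,\dots,1,0,\dots,0]$ is log-concave, note that for $\b$-matchings one has $r_{\max}=r_{\min}=1$ and $\lambda_{\max}=\lambda_{\min}=\lambda$, so $P_{\max}$ depends only on $\Delta$ and $\lambda$, and invoke the general Holant result (\cref{thm:holant-mixing}). One slip: your marginal-boundedness argument only treats full conditioning on $E\setminus\{e\}$, where indeed the conditional is $\lambda/(1+\lambda)$ or $0$. The definition used here requires the bound for \emph{every} partial pinning $\tau$ on $\Lambda\subseteq E$, and under such a pinning $\mu^\tau(\sigma_e=1)$ is not $\lambda/(1+\lambda)$ in general (take a star with $b=1$ at the center and $\Lambda=\emptyset$: the marginal of each edge is roughly $\lambda/(1+\Delta\lambda)$). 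This is harmless because \cref{thm:holant-mixing} already packages the mixing-time conclusion, so you need not verify marginal boundedness separately; if you insist on a direct bound it is the content of \cref{lem:marginal-bound}, and the constants do depend on $\Delta$.

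Your last paragraph guesses the wrong mechanism for the general theorem. The paper does not use a self-avoiding-walk tree recursion or a potential-function contraction. It proves spectral independence via the $W_1$ criterion (\cref{lem:coup-indep-imply-spec-indep}) by building an explicit recursive coupling (\cref{alg:couple-holant}) between $\mu_{G,\f,\blam}$ and $\mu_{G,\cD_v\f,\blam}$, the two measures differing by a single downward shift of the signature at one vertex $v$. Log-concavity enters only through \cref{lem:log-concave-monotone}, which guarantees an edge $e\in E_v$ with $\mu_e(1)\ge\mu'_e(1)$; the optimal coupling on $e$ then either makes the two samples agree (and the disagreement stays at $v$ on a smaller graph) or passes the disagreement to the neighboring endpoint $u$, and the whole process halts with probability at least $\mu_{E_v}(\bm{0})\ge 1/P_{\max}$, yielding $W_1\le P_{\max}-1$.
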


We prove \cref{thm:b-matching-mixing} by the spectral independence method which was introduced recently in \cite{ALO20} and becomes a powerful tool for proving optimal mixing time of Glauber dynamics.
Our proof of spectral independence is inspired by \cite{chen2023near} and uses a recursive coupling to bound the Wasserstein $W_1$ distance under two distinct pinnings.
For uniformly random $b$-matchings with small $b$, our bound on spectral independence is $O(\Delta^b)$; see \cref{rmk:SI-b-matching} for more discussions.
We remark that one interesting open problem is to show spectral independence with a constant independent of $\Delta$ even just for the usual matchings (monomer-dimer model), since then one would obtain $O(n \log n)$ mixing of Glauber dynamics on all graphs even with unbounded degrees, using new powerful techniques such as the field dynamics \cite{CFYZ21,AJKPV22,CFYZ22,CE22}.

\subsection{Holant problem with log-concave signatures}
Both $b$-matchings and $b$-edge covers belong to a much more general family of models called \emph{Holant problems},
which can be understood as graphical models defined over subsets of edges of a given graph.
Examples and applications of Holant problems include also perfect matchings \cite{JSV04}, even subgraphs \cite{JS93,GJ18,LSS19,chen2022spectral,chen2023near,FGW22}, Fibonacci gates \cite{LWZ14}, spin systems on line graphs \cite{DHJM21,GLLZ21,BCR21,chen2022spectral}, etc.

We consider the following \emph{binary symmetric Holant problem}.
Let $G = (V, E)$ be a graph with $n$ vertices. 
For each vertex $v$ let $d_v$ denote the degree of $v$.
We consider a family of constraint functions on all vertices denoted by $\f = (f_v)_{v\in V}$,
where each vertex $v$ is associated with a constraint function $f_v: \bN \to \bR_{\ge 0}$.
Also, let $\blam = (\lambda_e)_{e \in E}\in \bR_{>0}^E$ be a vector of edge weights.
The Gibbs distribution $\mu = \mu_{G,\f,\blam}$ and the partition function $Z = Z_{G,\f,\blam}$ of the Holant problem is defined as
\begin{align*}
  \mu (\st) &:= \frac{1}{Z} \prod_{v\in V} f_v(|\st \cap E_v|) \prod_{e \in S} \lambda_e , \quad \forall \st \subseteq E; \\
  Z &:= \sum_{\st \subseteq E} \prod_{v\in V} f_v(|\st \cap E_v|) \prod_{e \in S} \lambda_e.
\end{align*}
When $f_v(k) = \mathbbm{1}\{k\le b_v\}$ for some $\b = (b_v)_{v\in V}$, the Holant problem becomes $\b$-matchings.

Holant problems can be defined more generally by allowing each $f_v: 2^{E_v} \to \bR_{\ge 0}$ to be a set function over subsets of neighboring edges of $v$.
In this paper we consider only the \emph{symmetric} case, i.e., the value of $f_v$ depends only on $|S\cap E_v|$, the number of adjacent edges in $S$.
Such symmetric constraint function $f_v$ can be equivalently identified by the sequence $f_v = [f_v(0),f_v(1),\dots,f_v(d_v)]$, which is called the \emph{signature} at $v$.

Our main result for Holant problems establishes spectral independence and rapid mixing of Glauber dynamics when all the signatures are \emph{log-concave} sequences.

\begin{definition}[Log-concave signature]
\label{def:logconcave}
A sequence $f = [f(0),f(1),\dots,f(d)]$ of non-negative real numbers is called a log-concave signature if it satisfies the following conditions:
\begin{enumerate}
  \item[(a)] Log-concavity: $f(k)^2 \ge f(k-1)f(k+1)$ for all $1\le k\le d-1$;
  \item[(b)] No internal zeros: if $f(k_1)>0$ and $f(k_2)>0$ for some $0\le k_1< k_2 \le d$, then $f(k)>0$ for all $k_1 \le k \le k_2$ (i.e., the support of $f$ is consecutive).
\end{enumerate}
\end{definition}

For example, the signature $f = [1,\dots,1,0,\dots,0]$ for the function $f(k) = \mathbbm{1}\{k\le b\}$ is log-concave.

\begin{theorem}[Holant problem, informal]
\label{thm:holant-mixing-imprecise}
Let $G=(V,E)$ be an $n$-vertex graph of maximum degree $\Delta$. Suppose that $\f = (f_v)_{v\in V}$ is a collection of log-concave signatures with $f_v(0) > 0$ for all $v \in V$.
Let $\blam \in \bR_{>0}^E$ be a vector of edge weights.
Then the Gibbs distribution $\mu = \mu_{G,\f,\blam}$ for the Holant problem $(G,\f,\blam)$ is $O_{\Delta,\f,\blam}(1)$-spectrally independent.
Furthermore, the Glauber dynamics for sampling from $\mu$ has modified log-Sobolev constant at least $1/(Cn)$ and mixing time at most $C n\log n$, where $C=C(\Delta,\f,\blam)$ does not depend on $n$.
\end{theorem}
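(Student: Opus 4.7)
\textbf{Proof plan for \cref{thm:holant-mixing-imprecise}.}
The overall strategy is to first establish $O_{\Delta,\f,\blam}(1)$-spectral independence for $\mu$ and, more importantly, for \emph{every} conditional distribution obtained by pinning some subset of edges to $0$ or $1$ (so-called complete or flow-stable spectral independence). Once this is done, the desired MLSI $\ge 1/(Cn)$ and mixing time $O(n\log n)$ follow from the now-standard machinery of \cite{chen2021optimal,CFYZ22} applied to the edge Glauber dynamics; here the marginal lower bounds required by that machinery come for free from the hypotheses $f_v(0)>0$, the bounded degree $\Delta$, and the uniform positivity of the edge weights $\lambda_e$. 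So essentially all the work is in proving spectral independence.

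To bound the eigenvalues of the pairwise influence matrix $\Psi_\mu(e,e')=\bP_\mu[e'\in \st \mid e\in \st]-\bP_\mu[e'\in \st \mid e\notin \st]$, I would control its $\ell_\infty$ row sum. The first key step is to show that log-concave signatures force a form of \emph{negative pairwise correlation}: conditioning on $e\in \st$ can only weakly decrease the probability that another edge $e'$ is in $\st$. This should follow from a stochastic-domination / FKG-style argument using (a) log-concavity of each $f_v$ at the two endpoints of $e$, and (b) propagation through the rest of the graph via a coupling on the neighboring Holant problem; it generalizes the classical Heilmann--Lieb negative correlation for matchings. Consequently $\Psi_\mu(e,e')\le 0$ for $e\ne e'$, which reduces $\sum_{e'}|\Psi_\mu(e,e')|$ to the simple scalar quantity $\bE_\mu[|\st|\mid e\notin \st]-\bE_\mu[|\st|\mid e\in \st]$.

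The main technical obstacle is to bound this sum by a constant depending only on $\Delta,\f,\blam$, independently of $n$. For this I would use a tree-style recursion: peel edges off one at a time in BFS order from the endpoints of $e$, each time converting the conditional Holant problem into a smaller one. Conditioning a single edge $e=\{u,v\}$ to be absent or present replaces $f_u$ by $f_u(\cdot)$ or by the shifted signature $f_u(\cdot+1)$, and analogously at $v$. The crucial structural observation is that both restriction and unit shift preserve log-concavity and the no-internal-zero property of \cref{def:logconcave}, so the recursion stays inside our class and the same bounds apply at every level. The genuinely hard step, and the place where log-concavity is used quantitatively, is establishing a strict \emph{one-step contraction}: the magnitude of an influence transferred across a single BFS step shrinks by a factor $\alpha=\alpha(\Delta,\f,\blam)<1/(\Delta-1)$. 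I expect to prove this by reducing the influence between two edges sharing a vertex $v$ to a ratio of convex combinations of the shifted sequences $(f_v(k),f_v(k+1),\ldots)$, and then extracting the contraction factor from the log-concavity inequality $f_v(k)^2\ge f_v(k-1)f_v(k+1)$ together with uniform bounds on the edge weights and degree.

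Combining the contraction with the branching factor $\le \Delta-1$ gives $|\Psi_\mu(e,e')|\le C\cdot((\Delta-1)\alpha)^{d(e,e')}$ for some $C=C(\Delta,\f,\blam)$, so summing over $e'$ yields a geometric series bounded by a constant. This establishes $O_{\Delta,\f,\blam}(1)$-spectral independence for $\mu$. Because the log-concave class, the marginal positivity, and the contraction estimate are all preserved by arbitrary pinnings, the same bound holds uniformly under any conditioning, giving complete spectral independence and hence the claimed MLSI and $O(n\log n)$ mixing time via \cite{chen2021optimal,CFYZ22}.
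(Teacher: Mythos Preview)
Your proposal contains a concrete error and diverges substantially from the paper's argument. The global negative-correlation step is false already for ordinary matchings: on the path $a\text{--}b\text{--}c\text{--}d$ with edge weight $\lambda$, conditioning $ab\in S$ forces $bc\notin S$ and hence \emph{raises} the probability that $cd\in S$; one computes $\Psi_\mu(ab,cd)=\lambda/(1+\lambda)-\lambda/(1+2\lambda)>0$. So $\Psi_\mu(e,e')\le 0$ does not hold for all $e\ne e'$, and your reduction of the row sum to $\bE[|S|\mid e\notin S]-\bE[|S|\mid e\in S]$ collapses. (The paper's \cref{lem:log-concave-monotone} is a much weaker, purely local statement: shifting $f_v\mapsto\cD f_v$ decreases only $\bE|\sigma\cap E_v|$, not the marginal of every edge.) The contraction step is equally problematic: for matchings with large $\lambda$ the one-step influence between adjacent edges is $-\lambda/(1+\lambda)$, arbitrarily close to $1$ in absolute value, so a raw BFS contraction with factor $\alpha<1/(\Delta-1)$ cannot hold without a potential function, and for general log-concave signatures no such correlation-decay statement is known. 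Your sketch gives no mechanism for extracting the contraction from log-concavity; this is exactly the ``genuinely hard step'' you flag, and it is not resolved.

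The paper avoids correlation decay entirely. It bounds spectral independence through the Wasserstein criterion of \cref{lem:coup-indep-imply-spec-indep}: for pinnings differing at one edge it builds an explicit recursive coupling (\cref{alg:couple-holant}) between $\mu=\mu_{G,\f,\blam}$ and $\mu'=\mu_{G,\cD_v\f,\blam}$. Log-concavity is used only via \cref{lem:log-concave-monotone} to guarantee that at each step \emph{some} edge $e\in E_v$ satisfies $\mu_e(1)\ge\mu'_e(1)$; the coupling either agrees on $e$ (recursing at $v$ on $G\setminus e$) or disagrees and transports the discrepancy to the other endpoint $u$. The bound on $W_1$ does not come from a per-step contraction but from the halting estimate $\mu_{E_v}(\bm{0})\ge 1/P_{\max}$ of \cref{lem:all-zero-prob}, which gives $W_1(\mu,\mu')\le P_{\max}-1$ by an inductive argument. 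Marginal boundedness (\cref{lem:marginal-bound}) and \cref{thm:spec-indep-imply-mixing} then give the MLSI and mixing-time conclusions.
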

This is informal because technically speaking, vectors $\f$ and $\blam$ are dependent on $n$ in dimensions. For a precise statement, see \cref{thm:holant-mixing}.

We remark that the Gibbs distribution $\mu = \mu_{G,\f,\blam}$ in \cref{thm:holant-mixing-imprecise} is supported on $\b$-matchings where $b_v = \max\{0\le k\le d_v: f_v(k) > 0\}$, and thus the Glauber dynamics for sampling from $\mu$ is ergodic.
Our assumptions of log-concave signatures in fact generalize previous works \cite{GLLZ21,BCR21,chen2022spectral} which essentially require that the generating polynomial $P(x) = \sum_{k=0}^d \binom{d}{k} f(k) x^k$ associated with every signature $f$ is real-rooted, which implies the log-concavity of $f$ by Newton inequalities, see e.g.~\cite{Bra15}.
Hence, \cref{thm:holant-mixing-imprecise} applies to many classes of Holant problems including the antiferromagnetic $2$-spin systems on line graphs.


\subsection{Hardcore model on claw-free graphs}
Another contribution of ours is that the Glauber dynamics has the optimal $O(n\log n)$ mixing time for the hardcore model on $n$-vertex bounded-degree claw-free graphs.
In the hardcore model, we are given a graph $G=(V, E)$ and $\blam = (\lambda_v)_{v\in V}\in \bR_{>0}^V$ a vector of vertex weights called fugacity.
A set $I\subseteq V$ is called an independent set if $e\not\subseteq I$ for all $e\in E$.
Let $\cI \subseteq 2^V$ be the set of all independent sets of $G$.
Define the Gibbs distribution $\mu=\mu_{G,\blam}$ and the partition function $Z=Z_{G,\blam}$, also called the (multivariate) independence polynomial, as
\begin{align*}
  \mu(I) &:= \frac 1Z \prod_{v\in I} \lambda_v, \qquad \forall I\in \cI;\\
  Z &:= \sum_{I\in \cI} \prod_{v\in I} \lambda_v.
\end{align*}
Specially, when $\blam=\lambda \bm{1}$, we denote the model as $\mu_{G,\lambda}$.

The Glauber dynamics is a natural Markov chain for sampling from the hardcore model.
In each step of the Glauber dynamics, a vertex $v\in V$ is picked uniformly at random, and its status, $v\in I$ (occupied) or $v\not \in I$ (unoccupied), is updated according to the configuration on all other vertices.
Specifically, if $v$ has at least one neighbor in the current independent set $I$, then nothing changes; if $v$ has no neighbors in $I$, then it becomes occupied with probability $\frac{\lambda_v}{1+\lambda_v}$ and unoccupied with probability $\frac{1}{1+\lambda_v}$.

We consider sampling from the hardcore model on a special class of graphs, the \emph{claw-free} graphs.
A graph $G=(V,E)$ is claw-free if it does not include an induced $K_{1,3}$.
In other words, there do not exist four distinct vertices $a,b,c,d\in V$ such that $(a,b),(a,c),(a,d)\in E$ but $(b,c),(b,d),(c,d)\not \in E$.
The class of claw-free graphs includes all line graphs by definition, and thus the hardcore model on claw-free graphs includes in particular the monomer-dimer model for matchings as a special case.

It was known that one can sample from the hardcore model on claw-free graphs in polynomial time.
Generalizing the approach from \cite{JS89,Jbook} for matchings, Matthews gave a Markov chain which mixes in $O(\Delta n^3)$ time where $\Delta$ is the maximum degree \cite{Mat08}.
Recently, \cite{DGM21} proved that the Glauber dynamics mixes in $O(n^5 \log n)$ time for claw-free graphs, and more generally in polynomial time for graphs with bounded bipartite pathwidth.

In another direction, Patel and Regts \cite{PR17} gave a polynomial-time deterministic algorithm (FPTAS) for approximating the partition function based on Barvinok's polynomial interpolation method \cite{Bar16book} and real-rootedness of the independence polynomial on claw-free graphs \cite{CS07,LR19}.
As is common for deterministic approximate counting algorithms, the exponent in $n$ in the running time depends on parameters of the model.

Our main result for the hardcore model on claw-free graphs is that the Glauber dynamics has optimal mixing when the maximum degree is bounded.
Again we prove optimal mixing by establishing spectral independence via a recursive coupling procedure.
\begin{theorem}[Hardcore model on claw-free graphs]
  \label{thm:hardcore-claw-free}
  Let $\Delta \ge 3$ be an integer and $G=(V,E)$ be an $n$-vertex claw-free graph of maximum degree $\Delta$.
  Let $\blam\in \bR_{>0}^V$ be a vector of fugacity
  with $\lambda_{\min} := \min_{v\in V} \lambda_v$ and $\lambda_{\max} := \max_{v\in V} \lambda_v$.
  Then the Gibbs distribution $\mu = \mu_{G,\blam}$ of the hardcore model is $2(1+\Delta\lambda_{\max})$-spectrally independent.
  Furthermore, the Glauber dynamics for sampling from $\mu$ has modified log-Sobolev constant at least $1/(Cn)$ and mixing time at most $C n\log n$, where $C=C(\Delta,\lambda_{\max},\lambda_{\min})$ does not depend on $n$.
\end{theorem}

\section{Preliminaries} \label{sec:prelim}
In this section we give definitions and lemmas that are needed.
We introduce with the Holant problems in mind, but the definitions and results work for the hardcore model with minor changes (e.g., replacing $E$ with $V$).

Let $2^E$ be the collection of all subsets of $E$.
We consider a distribution $\mu$ on $2^E$.
We view a subset $\st \subseteq E$ equivalently as a binary indicator vector $\sigma = \bm{1}_\st \in \{0,1\}^E$, where $\sigma_e=1$ for $e\in S$, and $\sigma_e=0$ for $e\not \in S$.

\begin{definition}[Pinning]
A pinning is a partial configuration $\tau \in \{0,1\}^\Lambda$ for some $\Lambda\subseteq E$ such that $\mu_\Lambda(\tau) > 0$, where $\mu_\Lambda$ is the marginal distribution on $\Lambda$.
Let $\mu^{\tau}$ denote the conditional distribution on $E \setminus \Lambda$.
\end{definition}

\begin{definition}[Marginal boundedness]
We say $\mu$ is $b$-marginally bounded if for all pinnings $\tau$ on $\Lambda \subseteq E$ and all $e \in E \setminus \Lambda$, we have either $b \le \mu^\tau(\sigma_e=1) \le 1-b$ or $\mu^\tau(\sigma_e=1) \in \{0,1\}$.
\end{definition}

\begin{definition}[Influence matrix]
Let $\mu$ be a distribution on $2^E$ and $\tau$ be a pinning on $\Lambda \subseteq E$.
The pairwise influence matrix $J_\mu^\tau \in \bR^{(E\backslash \Lambda) \times (E\backslash \Lambda)}$ is defined as following:
for all $e,f\in E\backslash \Lambda$, let
$$
J_\mu^\tau(e,f) =
\mu^\tau(\sigma_f = 1 | \sigma_e = 1) - \mu^\tau(\sigma_f = 1 | \sigma_e = 0)
$$
when $e\ne f$ and $\min\{\mu^\tau(\sigma_e=1), \mu^\tau(\sigma_e=0)\} >0$, and let $J_\mu^\tau(e,f)=0$ otherwise.
All eigenvalues of the influence matrix $J_\mu^\tau$ are real.
\end{definition}

\begin{definition}[Spectral independence \cite{ALO20}]
We say $\mu$ is $\eta$-spectrally independent if for all pinnings $\tau$ we have $\lambda_{\max}(J_\mu^\tau) \le \eta$.
\end{definition}

\begin{theorem}[\cite{chen2021optimal,BCCPSV22,chen2022spectral}] \label{thm:spec-indep-imply-mixing}
Let $\mu$ be the Gibbs distribution of a Holant problem on an $n$-vertex graph of maximum degree $\Delta$.
If $\mu$ is $\eta$-spectrally independent and $b$-marginally bounded, then the Glauber dynamics has modified log-Sobolev constant at least $1/(Cn)$ and mixing time at most $C n\log n$, where $C = C(\Delta,\eta,b)$ is a constant independent of $n$.
\end{theorem}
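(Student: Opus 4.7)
The plan is to follow the standard local-to-global template for deriving a modified log-Sobolev inequality from spectral independence, instantiated for the edge complex of the graph. First I would lift $\mu$ to a weighted $|E|$-uniform simplicial complex $X$ whose ground set is $\{(e,0),(e,1): e\in E\}$ and whose top-level faces $\{(e,\sigma_e)\}_{e\in E}$ carry weight $\mu(\sigma)$. Under this identification the Glauber dynamics on $\mu$ is (up to laziness) the down-up walk on $X$ that deletes one element and resamples, and every pinning $\tau$ corresponds precisely to a link of $X$.

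Next I would translate the hypothesis $\lambda_{\max}(J_\mu^\tau)\le \eta$ at every pinning into local spectral expansion of $X$ at every link: by a direct Perron-eigenvector computation (as in Anari--Liu--Oveis~Gharan and Alev--Lau), the second eigenvalue of the one-skeleton at the link $\tau$ is bounded in terms of $\lambda_{\max}(J_\mu^\tau)$. The local-to-global theorem then upgrades this to a spectral gap $\Omega(1/n)$ for the top-level down-up walk. To promote the spectral gap to a modified log-Sobolev inequality one needs to control how small conditional marginals can become, which is exactly where $b$-marginal boundedness enters: it guarantees that the entropy-to-variance comparison inequalities at every link carry a constant depending only on $b$ rather than on $n$. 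Invoking the entropic local-to-global theorem of Blanca--Caputo--Chen--Parisi--\v{S}tefankovi\v{c}--Vigoda (or equivalently the framework of Chen--Liu--Vigoda together with its Holant-adapted version in chen2022spectral) then yields modified log-Sobolev constant at least $1/(Cn)$ with $C=C(\Delta,\eta,b)$.

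Finally, I would convert the MLSI into a mixing-time bound via the standard inequality
\[
t_{\mathrm{mix}}(\varepsilon) \le C' \rho_{\mathrm{MLS}}^{-1}\bigl(\log\log(1/\mu_{\min}) + \log(1/\varepsilon)\bigr),
\]
using that $\log(1/\mu_{\min}) = O(n)$ because $|E|\le \Delta n/2$ and marginal boundedness forces a uniform lower bound on nonzero atoms of $\mu$. I expect the main obstacle to be the entropy step: passing from spectral independence at every link to \emph{entropic} contraction, rather than merely to a Poincar\'e-type spectral gap, genuinely requires the marginal-boundedness hypothesis, and keeping the dependence of $C$ on $\Delta,\eta,b$ independent of $n$ demands the refined trickle-down and marginal-stability estimates developed in the cited works. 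The role of the Holant structure here is mild---it enters only through the fact that the underlying ground set sits on the edges of a bounded-degree graph, which is what makes $|E|=O(n)$ and hence makes the $\log(1/\mu_{\min})$ factor $O(n)$.
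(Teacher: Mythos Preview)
The paper does not prove this theorem; it is stated with citations to \cite{chen2021optimal,BCCPSV22,chen2022spectral} and used as a black box. Your outline is a faithful high-level sketch of the argument in those cited works (lift to a weighted $|E|$-uniform complex, convert spectral independence to local spectral expansion at every link, invoke the entropic local-to-global theorem with marginal boundedness to obtain an $\Omega(1/n)$ modified log-Sobolev constant, and finish with the standard MLSI-to-mixing inequality using $\log(1/\mu_{\min})=O(n)$), so there is nothing in this paper to compare against and no discrepancy to report.
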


For two distributions $\nu,\pi$ on $2^E$, the $1$-Wasserstein distance between them is defined as
\[
W_1(\nu,\pi) = \inf_\mathcal{C} \bE_{(\sigma,\tau) \sim \mathcal{C}} \left[ d_{\mathrm{H}}(\sigma,\tau) \right],
\]
where the infimum is over all couplings between $\nu$ and $\pi$, and $d_{\mathrm{H}}(\cdot,\cdot)$ denotes the Hamming distance between two elements from $2^E$.
We use the following lemma from \cite{chen2023near} to establish spectral independence; see also \cite{CMM23,GGGH22,CLMM23} which use similar approach.

\begin{lemma}[\cite{chen2023near}] \label{lem:coup-indep-imply-spec-indep}
Let $\mu$ be the Gibbs distribution of a Holant problem.
Suppose that for some constant $\eta>0$, the following is true:
For any two pinnings $\tau$, $\tau'$ on the same subset $\Lambda\subseteq E$ which differ on exactly one edge, we have
\begin{align*}
  W_1(\mu^\tau, \mu^{\tau'}) \le \eta.
\end{align*}
Then $\mu$ is $\eta$-spectrally independent.
\end{lemma}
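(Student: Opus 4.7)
The plan is to control the largest eigenvalue of the pairwise influence matrix $J_\mu^\tau$ by bounding its $\ell_\infty$ operator norm (maximum absolute row sum), and to identify each row sum with a quantity that is dominated by $W_1$ between two one-edge-apart pinnings.

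First, I would fix a pinning $\tau$ on $\Lambda\subseteq E$ and an edge $e\in E\setminus\Lambda$, and assume $\mu^\tau(\sigma_e=1)\in(0,1)$ (otherwise the entire row $J_\mu^\tau(e,\cdot)$ is zero by definition, so there is nothing to prove for this row). Let $\tau_0,\tau_1$ be the two pinnings on $\Lambda\cup\{e\}$ obtained by extending $\tau$ with $\sigma_e=0$ and $\sigma_e=1$ respectively. These are valid pinnings (both conditional probabilities are positive), and they agree except at the single edge $e$, so the hypothesis of the lemma gives $W_1(\mu^{\tau_1},\mu^{\tau_0})\le\eta$. Rewriting the off-diagonal entries,
\begin{align*}
J_\mu^\tau(e,f) \;=\; \mathbb{E}_{\mu^{\tau_1}}[\sigma_f] - \mathbb{E}_{\mu^{\tau_0}}[\sigma_f], \qquad f\in E\setminus(\Lambda\cup\{e\}),
\end{align*}
and the diagonal entry $J_\mu^\tau(e,e)$ is $0$ by definition.

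Next, I would bound the row sum by the Wasserstein distance through a routine coupling calculation: for any coupling $\mathcal{C}$ of $\mu^{\tau_1}$ and $\mu^{\tau_0}$,
\begin{align*}
\sum_{f\in E\setminus\Lambda}\bigl|J_\mu^\tau(e,f)\bigr|
\;=\;\sum_f \bigl|\mathbb{E}_\mathcal{C}[\sigma_f-\sigma'_f]\bigr|
\;\le\;\mathbb{E}_\mathcal{C}\Bigl[\sum_f |\sigma_f-\sigma'_f|\Bigr]
\;=\;\mathbb{E}_\mathcal{C}\bigl[d_{\mathrm H}(\sigma,\sigma')\bigr].
\end{align*}
Taking the infimum over couplings and applying the hypothesis yields $\sum_{f}|J_\mu^\tau(e,f)|\le\eta$ for every row $e$, i.e., $\|J_\mu^\tau\|_\infty\le\eta$.

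Finally, since all eigenvalues of $J_\mu^\tau$ are real (as noted in the definition), the spectral radius is bounded by any induced operator norm, in particular $\lambda_{\max}(J_\mu^\tau)\le\|J_\mu^\tau\|_\infty\le\eta$. As $\tau$ was arbitrary, this establishes $\eta$-spectral independence. The proof is essentially bookkeeping; the only mildly delicate point is to correctly handle rows where $\mu^\tau(\sigma_e=1)\in\{0,1\}$, but the definition of $J_\mu^\tau$ conveniently zeros out those rows, so no separate argument is needed there.
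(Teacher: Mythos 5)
Your proof is correct and is precisely the standard argument behind this lemma (which the paper cites from \cite{chen2023near} without reproving): identify the $e$-th row of $J_\mu^\tau$ with the difference of marginal means under the two pinnings $\tau\cup\{e\gets 1\}$ and $\tau\cup\{e\gets 0\}$, bound the absolute row sum by $W_1$ via an arbitrary coupling, and conclude $\lambda_{\max}(J_\mu^\tau)\le\|J_\mu^\tau\|_\infty\le\eta$ using that the eigenvalues are real. The handling of the degenerate rows with $\mu^\tau(\sigma_e=1)\in\{0,1\}$ is also right, since the definition zeroes them out.
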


\section{Fast sampling for Holant problems with log-concave signatures}
We first give a precise statement for \cref{thm:holant-mixing-imprecise}.
It is helpful to define the following local generating polynomial associated with each vertex, as introduced in \cite{GLLZ21}.
\begin{definition}[Normalized generating polynomial]
For a signature $f = [f(0),f(1),\dots,f(d)]$ with $f(0) > 0$, define the \emph{normalized generating polynomial} to be
\[
P_f(x) = \frac{1}{f(0)}\sum_{k=0}^d \binom{d}{k} f(k) x^k.
\]
\end{definition}

\begin{definition}\label{def:para}
Let $(G,\f,\blam)$ be a Holant problem with log-concave signatures and $f_v(0)>0$ for all $v\in V$. We define:
\begin{align*}
    &r_{\max} := \max_{v\in V} \frac{f_v(1)}{f_v(0)}, \qquad
    &&r_{\min} := \min_{v\in V} \min_{k: f_v(k) > 0} \frac{f_v(k)}{f_v(k-1)},\\
    &\lambda_{\max} := \max_{e\in E} \lambda_e, \qquad
    &&\lambda_{\min} := \min_{e\in E} \lambda_e, \\
    &P_{\max} := \max_{v\in V} P_{f_v} \left( r_{\max}\lambda_{\max} \right), \qquad
    &&\Delta := \max_{v \in V} d_v.
\end{align*}
\end{definition}
\begin{theorem}[Holant problem]
  \label{thm:holant-mixing}
  Let $G=(V,E)$ be an $n$-vertex graph of maximum degree $\Delta$. Suppose that $\f = (f_v)_{v\in V}$ is a collection of log-concave signatures with $f_v(0) > 0$ for all $v \in V$.
  Let $\blam \in \bR_{>0}^E$ be a vector of edge weights.
  Then the Gibbs distribution $\mu = \mu_{G,\f,\blam}$ for the Holant problem $(G,\f,\blam)$ is $O(P_{\max})$-spectrally independent, where $P_{\max}$ is defined in \cref{def:para}.
  Furthermore, the Glauber dynamics for sampling from $\mu$ has modified log-Sobolev constant at least $1/(Cn)$ and mixing time at most $C n\log n$ for some $C=C(\Delta,P_{\max},r_{\min},\lambda_{\min})$, where $P_{\max},r_{\min},\lambda_{\min}$ are defined in \cref{def:para}.
\end{theorem}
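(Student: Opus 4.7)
My plan is to apply the two meta-theorems already in the paper. First, Lemma \ref{lem:coup-indep-imply-spec-indep} reduces $O(P_{\max})$-spectral independence to showing $W_1(\mu^\tau,\mu^{\tau'}) = O(P_{\max})$ whenever $\tau,\tau'$ are pinnings on the same set $\Lambda \subseteq E$ that disagree on exactly one edge. Second, Theorem \ref{thm:spec-indep-imply-mixing} upgrades spectral independence plus $b$-marginal boundedness into the claimed modified log-Sobolev and $O(n\log n)$ mixing bounds. Thus the entire proof comes down to (i) establishing $b$-marginal boundedness for some $b = b(\Delta,P_{\max},r_{\min},\lambda_{\min}) > 0$, and (ii) establishing the single-disagreement Wasserstein contraction.

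For (i) I would fix a pinning $\tau$ on $\Lambda$ and an unpinned edge $e$ such that $\sigma_e \in \{0,1\}$ both remain feasible, and write $\mu^\tau(\sigma_e=1)/\mu^\tau(\sigma_e=0)$ as $\lambda_e$ times a ratio of restricted Holant partition functions on $E\setminus(\Lambda \cup \{e\})$ with modified pinnings at the two endpoints of $e$. The monotonicity of signature ratios $f_v(k+1)/f_v(k)$ guaranteed by log-concavity, together with the bounds on $\lambda_{\min},\lambda_{\max},r_{\min},r_{\max}$ and $\Delta$, pins this quotient inside a compact subinterval of $(0,\infty)$; since $P_{\max}$ already encodes $r_{\max}\lambda_{\max}$, the stated dependence of $b$ follows.

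For (ii) I would construct a disagreement-percolation coupling between $\mu^\tau$ and $\mu^{\tau'}$. Assume the two pinnings disagree on a single pinned edge $e^* = \{u,w\}$, with $\tau_{e^*} = 0$ and $\tau'_{e^*} = 1$. Starting from $u$ and $w$, I reveal the unpinned edges in breadth-first order; at each step the next edge $f$ (incident to some vertex $v$ whose currently-revealed occupied count differs between the two copies) is sampled under the optimal coupling of its two conditional marginals $\mu^\tau(\sigma_f \mid \cdot)$ and $\mu^{\tau'}(\sigma_f \mid \cdot)$. Log-concavity of $f_v$ guarantees that raising the occupied count at $v$ by one shifts every subsequent conditional Bernoulli rate monotonically, which gives a clean Strassen coupling at each step and a TV bound on the marginal gap at $f$ controlled by a ratio of the form $f_v(k+1)/f_v(k) \le r_{\max}$ together with $\lambda_f \le \lambda_{\max}$.

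The main obstacle is to show that the expected total Hamming cost of this coupling is $O(P_{\max})$ independently of $n$ and of the graph topology. My plan is to dominate the disagreement cluster by a branching process on the self-avoiding-walk tree rooted at $e^*$ and to verify that, after summing over the at most $\Delta - 1$ outgoing edges and over all feasible occupancy levels at each internal vertex $v$, the per-vertex branching factor is controlled by a constant multiple of $P_{f_v}(r_{\max}\lambda_{\max}) \le P_{\max}$. Log-concavity is essential here for two reasons: it reduces the worst-case per-vertex contribution to a sum recognizable as the normalized generating polynomial, and it also makes the propagation contributions telescope along paths into a convergent series rather than compounding multiplicatively. Once this bound is in hand, combining (i), (ii), Lemma \ref{lem:coup-indep-imply-spec-indep}, and Theorem \ref{thm:spec-indep-imply-mixing} immediately yields the theorem.
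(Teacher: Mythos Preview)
Your high-level plan (reduce to a single-edge Wasserstein bound via Lemma~\ref{lem:coup-indep-imply-spec-indep}, then invoke Theorem~\ref{thm:spec-indep-imply-mixing} with a marginal bound) matches the paper. The marginal-boundedness sketch in (i) is essentially right; the paper does exactly this via Lemma~\ref{lem:all-zero-prob} and Proposition~\ref{lem:marginal-bound}.

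The genuine gap is in (ii). Your proposed analysis dominates the disagreement cluster by a branching process on the SAW tree with ``per-vertex branching factor controlled by a constant multiple of $P_{f_v}(r_{\max}\lambda_{\max}) \le P_{\max}$.'' But $P_{\max} \ge 1$ always (since $P_f(0)=1$ and $P_f$ is increasing), so this branching process is never subcritical and its expected size is infinite. Your remark that log-concavity makes contributions ``telescope along paths'' hints at the right phenomenon but does not explain why the disagreement stays on a path rather than branching. In a generic disagreement-percolation coupling, a discrepancy at $v$ can propagate to \emph{several} neighbours simultaneously, and nothing in your argument rules this out.

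The paper avoids this by ensuring the disagreement \emph{never branches}. First it splits the two-endpoint disagreement into two single-vertex shifts via the triangle inequality (compare $\mu_{\wt G,\wt\f}$, $\mu_{\wt G,\cD_u\wt\f}$, $\mu_{\wt G,\cD_v\cD_u\wt\f}$). Then, for a single shift $\f \to \cD_v\f$, it proves (Lemma~\ref{lem:log-concave-monotone}, an averaging argument using log-concavity) that there \emph{exists} an edge $e\in E_v$ with $\mu_e(1)\ge\mu'_e(1)$; coupling that particular edge optimally forces any disagreement on $e$ to be $(\sigma_e,\sigma'_e)=(1,0)$, which transfers the single-vertex discrepancy from $v$ to the other endpoint $u$ rather than spawning multiple discrepancies. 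The resulting process is a random path, and the probability it terminates at $v$ is at least $\mu_{E_v}(\bm 0)\ge 1/P_{\max}$ (Lemma~\ref{lem:all-zero-prob}), yielding $W_1\le P_{\max}-1$ by induction. Your proposal is missing both the single-vertex reduction and, crucially, the existence-of-a-monotone-edge lemma that keeps the disagreement one-dimensional.
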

See also \cref{lem:P-max-bound} for a simple upper bound of $P_{\max}$ in terms of $\Delta,r_{\max},\lambda_{\max}$.

By \cref{thm:spec-indep-imply-mixing}, to prove \cref{thm:holant-mixing}, it suffices to establish spectral independence and marginal boundedness.
We focus on spectral independence in \cref{subsec:SI} whose proof is based on the log-concavity of signatures.
We give the marginal bound analysis in \cref{subsec:marginal-bounds}. The proofs of main results can be found in \cref{subsec:proof-main}.

\subsection{Spectral independence}
\label{subsec:SI}

In this subsection we derive a constant bound of spectral independence via \cref{lem:coup-indep-imply-spec-indep}.

\begin{proposition}[Spectral independence]
\label{lem:SI}
Under the assumptions of \cref{thm:holant-mixing}, the Gibbs distribution $\mu$ of the Holant problem $(G,\f,\blam)$ is $O(P_{\max})$-spectrally independent.
\end{proposition}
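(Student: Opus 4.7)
By the coupling-implies-spectral-independence statement (\cref{lem:coup-indep-imply-spec-indep}), it suffices to produce, for every pair of pinnings $\tau,\tau'$ on the same set $\Lambda\subseteq E$ differing on a single edge $e^*=\{u,v\}$, a coupling of $\mu^\tau$ and $\mu^{\tau'}$ whose expected Hamming distance is $O(P_{\max})$. Without loss of generality we may take $\tau_{e^*}=0$ and $\tau'_{e^*}=1$.

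My first step would be to rewrite both conditional distributions as Gibbs measures of Holant problems on $E\setminus\Lambda$ with modified signatures $g_w(k)=f_w(k+j_w)$, where $j_w$ counts edges of $\Lambda$ incident to $w$ that are pinned to $1$. A routine verification shows that log-concavity and the no-internal-zeros property survive this shift-and-truncate operation, so $\mu^\tau$ and $\mu^{\tau'}$ are both Gibbs measures of log-concave Holant problems. Moreover the two problems differ only at the endpoints of $e^*$, where $j_u$ and $j_v$ are each larger by $1$ under $\tau'$ than under $\tau$; everywhere else the effective signatures coincide.

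The main work is then an explicit coupling built by BFS-style disagreement propagation from $\{u,v\}$. The plan is to reveal edges in a canonical BFS order, sampling each newly revealed edge via the optimal coupling of its two conditional marginals; disagreements can then only propagate through vertices whose effective signature has already been perturbed by an upstream disagreement. The quantitative core of the argument is a \emph{local coupling lemma}: at any vertex $w$ whose effective signature has been shifted from some log-concave $g$ to $g^{(+1)}(k):=g(k+1)$, the optimal coupling of the joint distributions on the incident edges $E_w$ has expected Hamming cost at most $O(P_{f_w}(r_{\max}\lambda_{\max}))$. Log-concavity enters decisively here via the fact that $g(k+1)/g(k)$ is nonincreasing in $k$, which produces a monotone stochastic comparison between the two signatures and, after pairing weights level-by-level, bounds the coupling cost by the normalized generating polynomial evaluated at the worst-case effective field $r_{\max}\lambda_{\max}$. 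Summing these local contributions over the BFS exploration then yields $W_1(\mu^\tau,\mu^{\tau'})=O(P_{\max})$.

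I expect the main obstacle to be the local coupling lemma. Turning the qualitative consequence of log-concavity (the monotone ratio) into a clean quantitative Hamming-distance bound requires carefully comparing weights $g(k)\prod_{e}\lambda_e$ across occupancy levels $k$ and correctly accounting for the possible number of flipped edges at each level — this is precisely where the shape of $P_f(x)$ emerges, with $r_{\max}$ dominating $g(k)/g(k-1)$ and $\lambda_{\max}$ dominating each $\lambda_e$. Additional care will be needed on the boundary of the support of $g^{(+1)}$, where a formerly positive entry becomes zero, and it is exactly here that the no-internal-zeros hypothesis of \cref{def:logconcave} is needed to keep the conditional distributions well-defined and the coupling well-posed.
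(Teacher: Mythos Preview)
Your reduction in the first two paragraphs is correct and matches the paper: apply \cref{lem:coup-indep-imply-spec-indep}, and note that a pinning turns $\mu^\tau,\mu^{\tau'}$ into Gibbs measures of two log-concave Holant problems on $G\setminus\Lambda$ that differ only by a single downward shift of the signature at each endpoint of $e^*$. So far so good.

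The gap is in your coupling construction. A BFS that reveals all of $E_w$ at a disagreement vertex $w$ and bounds the expected Hamming cost there by $O(P_{f_w}(r_{\max}\lambda_{\max}))$ does not control the \emph{number} of disagreement vertices created: each disagreeing edge in $E_w$ shifts the signature at its other endpoint and spawns a new disagreement vertex. If the expected number of disagreeing edges at $w$ can exceed $1$ (and nothing in your lemma prevents this; you only claim it is $O(P_{\max})$), the process branches supercritically and ``summing these local contributions over the BFS exploration'' is unbounded. Your sketch never explains why branching stays subcritical, and the monotone-ratio observation about $g(k+1)/g(k)$ does not by itself give a stochastic domination on the joint law of $E_w$ strong enough to force at most one disagreement.

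The paper avoids branching entirely by a different mechanism. It couples \emph{one edge at a time} and maintains exactly one disagreement vertex throughout. The role of log-concavity is not a joint monotone comparison on $E_w$ but the weaker (and provable) fact that $\bE_{\mu}|\sigma\cap E_v|\ge \bE_{\mu'}|\sigma'\cap E_v|$, which guarantees the existence of a single edge $e\in E_v$ with $\mu_e(1)\ge\mu'_e(1)$. Coupling that edge optimally means a disagreement can only occur as $(\sigma_e,\sigma'_e)=(1,0)$, which transfers the single disagreement from $v$ to the other endpoint $u$ rather than duplicating it. The quantitative bound then comes from a halting argument: the probability that all of $E_v$ is coupled to $0$ before the disagreement ever leaves $v$ is at least $\mu_{E_v}(\bm 0)\ge 1/P_{\max}$, which yields the recursion $W_1\le(1-1/P_{\max})\,P_{\max}=P_{\max}-1$. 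Applying this once for each endpoint of $e^*$ and the triangle inequality gives $2(P_{\max}-1)$. Your proposal is missing both the single-edge/no-branching structure and the halting-probability estimate that makes the recursion close.
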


For spectral independence we need to consider the conditional distribution $\mu^\tau$ under an arbitrary pinning $\tau$.
We note that a pinning $\tau$ on a subset $\Lambda \subseteq E$ of edges induces a Holant problem on the subgraph $G \setminus \Lambda$.
To formalize this relationship, it is helpful to define the following notation of downward shifting operator.
\begin{definition}[Downward shifting]
  For a function $f: \bN \to \bR_{\ge 0}$, we define the function $\cD f: \bN \to \bR_{\ge 0}$ as
  $$(\cD f)(k) = f(k+1), \quad \forall k \in \bN. $$
  We further define $\cD^m f = \cD(\cD^{m-1} f)$ for integer $m \ge 1$.
\end{definition}

Let $(G,\f,\blam)$ be a Holant problem.
For a subset $U\subseteq V$ of vertices, we define $\cD_U \f$ as
\begin{align*}
  (\cD_U \f)_v = \left\{
    \begin{array}{cc}
      \cD f_v, & v\in U,\\
      f_v,  & v\not \in U.
    \end{array}
  \right.
\end{align*}
For a pinning $\tau$, we define $\cD_\tau \f$ as for all $v \in V$,
\[
(\cD_\tau \f)_v = \cD^{|\tau \cap E_v|} f_v.
\]
Observe that if an adjacent edge of a vertex $v$ is pinned to be occupied, it corresponds to changing the signature of $v$ from $f_v$ to $\cD f_v$.
Hence, the Holant problem $(G,\f,\blam)$ with pinning $\tau$ on $\Lambda \subseteq E$ induces a smaller instance of Holant problem $(G\setminus \Lambda, \cD_\tau \f, \blam_{E\setminus \Lambda})$ on the subgraph $G\setminus \Lambda$.

\begin{observation}\label{obs:pinning}
Consider a Holant problem $(G,\f,\blam)$ satisfying the condition in \cref{thm:holant-mixing}.
Then for any pinning the induced Holant problem also satisfies the conditions.
Furthermore, all parameters in \cref{def:para} are ``monotone'' in pinnings;
i.e., we have that $r_{\max},\lambda_{\max},P_{\max},\Delta$ are non-increasing under any pinning and $r_{\min},\lambda_{\min}$ are non-decreasing.
\end{observation}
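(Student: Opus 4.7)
The plan is to prove every claim by induction on the size of the pinning: it suffices to handle the single-edge case, since any pinning on $\Lambda$ can be built up one edge at a time and all parameters depend only on the current problem. So fix an edge $e=\{u,v\} \in E$ with pinning $\tau_e \in \{0,1\}$, and denote by a prime the resulting pinned problem. I would first check structural closure, then the easy parameter monotonicities, and finally $P_{\max}$, which is the only step requiring a real calculation.

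\emph{Structural closure.} If $\tau_e = 0$, every $f_w$ is unchanged and simply viewed over the smaller edge set $E_w \setminus \{e\}$, so log-concavity, consecutive support, and $f_w(0) > 0$ are inherited. If $\tau_e = 1$, the signature at $u$ and $v$ becomes $\cD f_\cdot$ with $(\cD f)(k) = f(k+1)$. The log-concavity inequality $(\cD f)(k)^2 \ge (\cD f)(k-1)(\cD f)(k+1)$ is just log-concavity of $f$ at $k+1$; the support is shifted left by one, hence still consecutive and containing $0$; and the positivity $(\cD f_v)(0) = f_v(1) > 0$ is forced by the validity of the pinning $\mu_\Lambda(\tau) > 0$.

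\emph{Easy parameter monotonicities.} For $\Delta$, $\lambda_{\max}$, and $\lambda_{\min}$, the max/min is taken over a strictly smaller edge set, so monotonicity is immediate. For $r_{\max}$: pinning $e$ to $1$ changes the ratio at $v$ from $f_v(1)/f_v(0)$ to $f_v(2)/f_v(1)$, which is no larger by log-concavity; pinning $e$ to $0$ leaves the ratio at $v$ unchanged. For $r_{\min}$: the set $\{f_v(k)/f_v(k-1) : f_v(k) > 0\}$ only loses elements under pinning, so the minimum can only increase.

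\emph{Monotonicity of $P_{\max}$.} Since $r'_{\max}\lambda'_{\max} \le r_{\max}\lambda_{\max}$ and $P_{f'_v}$ has non-negative coefficients, it suffices to show $P_{f'_v}(x) \le P_{f_v}(x)$ pointwise for $x \ge 0$. Pinning $e$ to $0$ replaces $\binom{d_v}{k}$ by $\binom{d_v-1}{k}$ while keeping $f_v$ and $f_v(0)$ fixed, giving the inequality coefficient-wise. Pinning $e$ to $1$ yields
\[
P_{f'_v}(x) = \frac{1}{f_v(1)}\sum_{k=0}^{d_v-1}\binom{d_v-1}{k} f_v(k+1)\, x^k,
\]
so the key inequality reduces to $f_v(k+1)/f_v(1) \le f_v(k)/f_v(0)$ for all $k \ge 0$, which follows from $f_v(j+1)/f_v(j)$ being non-increasing in $j$ (a standard consequence of log-concavity). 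Combined with $\binom{d_v-1}{k} \le \binom{d_v}{k}$, this gives $P_{f'_v}(x) \le P_{f_v}(x)$. Iterating over all pinned edges and taking the max over $v$ finishes the proof. The $P_{\max}$ step is the one mild obstacle, but it comes down to a single application of the decreasing-ratio form of log-concavity.
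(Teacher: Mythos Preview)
Your proof is correct and follows essentially the same idea as the paper's, which gives only the single inequality $f_v(k)/f_v(0) \ge f_v(k+\ell)/f_v(\ell)$ (the decreasing-ratio form of log-concavity) and leaves everything else implicit. Your version fills in all the details---the one-edge induction, the structural closure under $\cD$, and the coefficient-wise comparison for $P_{\max}$---but the core mechanism is identical.
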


\begin{proof}
The observation follows from that all signatures are log-concave and hence for all $v \in V$,
\[
\frac{f_v(k)}{f_v(0)} \ge \frac{f_v(k+\ell)}{f_v(\ell)}
\]
where $0 \le k \le d_v$ and $0\le \ell \le d_v-k$ (assuming $0/0 = 0$).
\end{proof}

By \cref{obs:pinning}, $\mu^\tau$ corresponds to an induced Holant problem still satisfying the conditions of \cref{thm:holant-mixing}; hence it suffices to focus on the no-pinning case.
The following proposition gives the key step for bounding the spectral independence constant via \cref{lem:coup-indep-imply-spec-indep}.
It upper bounds the expected number of discrepancies when one signature $f_v$ is changed to $\cD f_v$, i.e., the difference between an adjacent (half-)edge of $v$ is occupied and unoccupied.

\begin{proposition} \label{prop:holant-coupling}
Let $(G,f,\lambda)$ be a Holant problem satisfying the conditions in \cref{thm:holant-mixing} with Gibbs distribution $\mu = \mu_{G,\f,\blam}$.
Suppose $v\in V$ is a vertex with $f_v(1)>0$, and let $\mu' = \mu_{G,\cD_v\f,\blam}$ be the Gibbs distribution of the Holant problem obtained by changing $f_v$ to $\cD f_v$.
Then we have
\begin{align*}
W_1(\mu, \mu')
\le P_{\max} -1.
\end{align*}
\end{proposition}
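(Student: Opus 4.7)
The plan is to construct an explicit coupling $(\sigma,\sigma')$ with $\sigma\sim\mu$ and $\sigma'\sim\mu'$ and to bound $\mathbb{E}[d_{\mathrm{H}}(\sigma,\sigma')]$ directly. The first structural observation is that $\mu$ and $\mu'$ differ only through the signature at the single vertex $v$, so for every integer $k$ the conditional distributions of $S$ given $|S\cap E_v|=k$ coincide under $\mu$ and $\mu'$. Hence the coupling problem decomposes into (i) coupling the two marginals of the local count $K:=|S\cap E_v|$, and (ii) coupling the conditional distributions $\nu_k$ on $2^E$ for different values of $k$.

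For step (i), I would exploit log-concavity of $f_v$. Writing the marginal probability mass functions as $\mu(K=k)\propto w(k)\,f_v(k)$ and $\mu'(K=k)\propto w(k)\,f_v(k+1)$ for the same weights $w(k)$ arising from the rest of the graph, the likelihood ratio $f_v(k+1)/f_v(k)$ is non-increasing in $k$ by log-concavity, so the law of $K'$ is stochastically dominated by the law of $K$. I take the monotone coupling, so $K'\le K$ almost surely. For step (ii), I would chain the couplings step by step: to pass from $\nu_k$ to $\nu_{k-1}$, remove one occupied edge $e=(v,u)$ at $v$; the resulting conditional on $E\setminus\{e\}$ is itself a Holant problem on $G\setminus\{e\}$ whose signature at $u$ has been shifted by one application of $\cD$. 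By induction on $|E(G)|$, the expected Hamming cost incurred by this single removal is bounded by the analogous quantity at $u$, which by Observation~\ref{obs:pinning} is again at most $P_{\max}-1$.

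The central quantitative step is to bound the total expected disagreement in terms of $P_{f_v}(r_{\max}\lambda_{\max})$. Here Observation~\ref{obs:pinning} is crucial: it provides a uniform upper bound $r_{\max}\lambda_{\max}$ on effective edge weights that survives under any pinning, allowing the weights $w(k)$ above to be controlled by the coefficients of $P_{f_v}(r_{\max}\lambda_{\max})$, so that summing over all local configurations at $v$ produces exactly the \emph{nonempty} mass $P_{\max}-1$. The main obstacle I anticipate is closing the recursion without a multiplicative blow-up: a naive loop that multiplies $\mathbb{E}[K-K']$ (which could be as large as $d_v$) by the per-step propagation bound would be exponential in the degree. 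To avoid this, rather than coupling each "extra" edge independently and propagating, I would couple all edges at $v$ jointly against a single recursive invocation on the smaller Holant problem with those edges already pinned, and then verify by a direct generating-function computation — using log-concavity to compare the true local partition function term-by-term to $f_v(0)\,P_{f_v}(r_{\max}\lambda_{\max})$ — that the total expected Hamming distance is dominated by $P_{\max}-1$.
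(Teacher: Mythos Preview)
Your decomposition via $K=|S\cap E_v|$ and the stochastic domination $K'\le K$ from log-concavity are both correct, but the recursion does not close. The conditional laws $\nu_k=\mu(\,\cdot\mid K=k)$ are \emph{not} Gibbs measures of Holant subproblems: the constraint $|S\cap E_v|=k$ is a global hard constraint, not a pinning, so neither ``remove one occupied edge and recurse at $u$'' nor ``pin all of $E_v$ and recurse'' lands you back in the class where the inductive hypothesis applies. Even granting a per-step cost of $P_{\max}$ for passing from $\nu_k$ to $\nu_{k-1}$, chaining gives $\mathbb{E}[K-K']\cdot P_{\max}$, and you have not shown $\mathbb{E}[K-K']\le 1-1/P_{\max}$. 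Your final paragraph anticipates this blow-up, but the proposed fix of revealing all of $E_v$ at once propagates disagreements to \emph{several} neighbours of $v$ simultaneously; the triangle inequality over those shifts yields only $d_v\,(P_{\max}-1)$, not $P_{\max}-1$, so the induction still does not close.

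The paper sidesteps this by a different recursion that keeps the discrepancy localized at a \emph{single} vertex throughout. It reveals one edge $e=\{v,u\}\in E_v$ at a time, chosen (via \cref{lem:log-concave-monotone}, which is where log-concavity enters) so that $\mu_e(1)\ge\mu'_e(1)$; under the optimal coupling on $e$ the outcome $(\sigma_e,\sigma'_e)=(0,1)$ is then impossible, so either the samples agree on $e$ and one recurses at $v$ in $G\setminus e$, or they disagree and the single discrepancy migrates to $u$. Both branches are genuine Holant subproblems on $G\setminus e$, so the induction applies directly. The probability of ever leaving $v$ is at most $1-\mu_{E_v}(\bm 0)$, and combining with the marginal bound $\mu_{E_v}(\bm 0)\ge 1/P_{\max}$ of \cref{lem:all-zero-prob} gives $(1-\mu_{E_v}(\bm 0))\cdot P_{\max}\le P_{\max}-1$. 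The key structural idea you are missing is this one-edge-at-a-time revealing that turns the discrepancy into a walk rather than a branching process.
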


\begin{algorithm}[t]
\caption{Coupling procedure for Holant problems}
\label{alg:couple-holant}
\begin{algorithmic}[1]
\Procedure{Couple}{$G, \f, \blam, v$}
\State \textbf{Input:} $(G,\f,\blam)$ a Holant problem, $v\in V$ a disagreeing vertex
\State \textbf{Output:} A pair of random configurations $(\sigma,\sigma') \in 2^E \times 2^E$ drawn from a coupling between $\mu = \mu_{G,\f,\blam}$ and $\mu' = \mu_{G,\cD_v\f,\blam}$

\medskip
\If{$v$ is isolated}
\State Sample $\sigma\sim \mu$
\State \Return $(\sigma,\sigma)$
\EndIf
\State Choose $e=\{u,v\}\in E$ such that
$\mu_e(1) \ge \mu'_e(1)$.
\Comment{\cref{lem:log-concave-monotone}}
\label{line:alg-couple-holant:choose-edge}
\State Sample $(\sigma_e,\sigma'_e)$ from an optimal coupling of $(\mu_e,\mu'_e)$. \label{line:alg-couple-holant:couple-edge}
\If{$\sigma_e = \sigma'_e= 0$}
\label{line:alg-couple-holant:coupled-no-edge}
\State $(\sigma_{E\backslash e}, \sigma'_{E\backslash e}) \gets \textsc{Couple}(G\backslash e, \f, \blam_{E\setminus e}, v)$
\ElsIf{$\sigma_e = \sigma'_e=1$}
\label{line:alg-couple-holant:coupled-edge}
\State $(\sigma_{E\backslash e}, \sigma'_{E\backslash e}) \gets \textsc{Couple}(G\backslash e, \cD_e\f, \blam_{E\setminus e}, v)$
\Else\Comment{We must have $\sigma_e = 1$, $\sigma'_e=0$}
\label{line:alg-couple-holant:uncoupled-edge}
\State $(\sigma'_{E\backslash e}, \sigma_{E\backslash e}) \gets \textsc{Couple}(G\backslash e, \cD_v \f, \blam_{E\setminus e}, u)$
\EndIf
\State \Return $(\sigma_e \cup \sigma_{E\backslash e}, \sigma'_e \cup \sigma'_{E\backslash e})$
\EndProcedure
\end{algorithmic}
\end{algorithm}

Our coupling between $\mu$ and $\mu'$ is inspired by \cite{chen2023near} which proves spectral independence for weighted even subgraphs with signatures $[1,a,1,a,\dots]$ for some $a>0$. Note that such signatures have period two which is crucial for the coupling arguments in \cite{chen2023near}. Our new ingredient is to construct a coupling without periodicity of signatures but incorporating the log-concavity in a suitable way.

\begin{proof}[Proof of \cref{prop:holant-coupling}]
  We construct a simple coupling between $\mu = \mu_{G,\f,\blam}$ and $\mu' = \mu_{G,\cD_v \f,\blam}$ using \cref{alg:couple-holant}.

  \paragraph{Coupling.} We prove that \cref{alg:couple-holant} produces a coupling between $\mu$ and $\mu'$.
  We prove this by induction on the number of edges of $G$.

  Base case is when $v$ is isolated. In this case we have $\mu =\mu'$. The algorithm produces the identity coupling.

  Induction step: Suppose $v$ is not isolated. By \cref{lem:log-concave-monotone}, there exists an edge $e\in E_v$ such that $\mu_e(1) \ge \mu'_e(1)$, so \cref{line:alg-couple-holant:choose-edge} runs successfully; we shall prove \cref{lem:log-concave-monotone} right after this proof.
  By \cref{line:alg-couple-holant:couple-edge}, the marginal distributions on edge $e$ are correct.
  In particular, note that because in \cref{line:alg-couple-holant:couple-edge} we choose an optimal coupling, it is impossible to have $\sigma_e=0$ and $\sigma'_e=1$ since $\mu_e(1) \ge \mu'_e(1)$.
  We need to prove that the recursive calls produce the desired distributions on $E\backslash e$.

  \begin{itemize}
    \item Case 1: $\sigma_e=\sigma'_e=0$.
    By induction hypothesis,
    \begin{align*}
      \sigma_{E\backslash e} &\sim \mu_{G\backslash e, \f,\blam_{E\setminus e}} = \mu_{E\backslash e}(\cdot \mid \sigma_e=0), \\
      \sigma'_{E\backslash e} &\sim \mu_{G\backslash e, \cD_v \f,\blam_{E\setminus e}} = \mu'_{E\backslash e}(\cdot \mid \sigma'_e=0).
    \end{align*}
    \item Case 2: $\sigma_e=\sigma'_e=1$.
    By induction hypothesis,
    \begin{align*}
      \sigma_{E\backslash e} &\sim \mu_{G\backslash e, \cD_e \f,\blam_{E\setminus e}} = \mu_{E\backslash e}(\cdot \mid \sigma_e=1), \\
      \sigma'_{E\backslash e} &\sim \mu_{G\backslash e, \cD_v \cD_e f,\blam_{E\setminus e}} = \mu'_{E\backslash e}(\cdot \mid \sigma'_e=1).
    \end{align*}
    \item Case 3: $\sigma_e=1$, $\sigma'_e=0$.
    By induction hypothesis,
    \begin{align*}
      \sigma_{E\backslash e} &\sim \mu_{G\backslash e, \cD_u \cD_v \f,\blam_{E\setminus e}} = \mu_{E\backslash e}(\cdot \mid \sigma_e=1), \\
      \sigma'_{E\backslash e} &\sim \mu_{G\backslash e, \cD_v \f,\blam_{E\setminus e}} = \mu'_{E\backslash e}(\cdot \mid \sigma'_e=0).
    \end{align*}
  \end{itemize}

  In all three cases, we see that $\sigma_{E\backslash e}$ and $\sigma'_{E\backslash e}$ have the desired distributions.
  So the algorithm returns correctly.

  \paragraph{$W_1$ distance.}
  Let us bound the expected $\ell_1$ distance under the coupling generated by \cref{alg:couple-holant}.
  We prove by induction on the number of edges that the expected $\ell_1$ distance is at most
  $
  P_{\max}  -1,
  $ as defined in \cref{def:para}.

  Base case is when $v$ is isolated. In this case the $\ell_1$ distance is $0$.

  Induction step: Suppose that $v$ is not isolated.
  We consider recursive calls of \cref{alg:couple-holant} until: (1) the input vertex becomes some other vertex $u \neq v$, or (2) the algorithm halts.
  We claim that Case~(2) happens with probability at least $\mu_{E_v}(\bm{0})$, which is the probability that all edges in $E_v$ are unoccupied under the Gibbs distribution $\mu$.

  To see this, let $\mathcal{A}$ be the following event: For all $1\le i \le d$ where $d = d_v = |E_v|$, in the $i$-th call of \cref{alg:couple-holant}, the algorithm picks an edge from $E_v$, denoted by $e_i$, which has not yet be chosen, and sets its values to be $\sigma_{e_i} = \sigma'_{e_i} = 0$ in both samples. Note that if $\mathcal{A}$ occurs then the algorithm halts in the $(d+1)$-th call and the $\ell_1$ distance is $0$.
  For simplicity of the proof we assume that there is a total ordering of edges and in \cref{line:alg-couple-holant:choose-edge} we always pick the smallest edge satisfying the requirement.
  Hence, the sequence of edges $(e_1,\dots,e_d)$ associated with the event $\mathcal{A}$ is fixed and deterministic: $e_i$ is the smallest edge in $E_v \setminus \{e_1,\dots,e_{i-1} \}$ such that
  \begin{align}\label{eq:choice-of-ei}
  \mu(\sigma_e = 1 \mid \sigma_{e_1} = \cdots = \sigma_{e_{i-1}} = 0)
  \ge \mu'(\sigma'_e = 1 \mid \sigma'_{e_1} = \cdots = \sigma'_{e_{i-1}} = 0).
  \end{align}
  It follows that
  \begin{align*}
    \Pr\left( \text{Case (2)} \right)
    \ge \Pr\left( \mathcal{A} \right)
    &= \prod_{i=1}^d \Pr\left( \sigma_{e_i} = \sigma'_{e_i} = 0 \,\Big\vert\, \forall j < i, \sigma_{e_j} = \sigma'_{e_j} = 0 \right) \\
    &\stackrel{(*)}{=} \prod_{i=1}^d \mu(\sigma_e = 0 \mid \sigma_{e_1} = \cdots = \sigma_{e_{i-1}} = 0)  \\
    &= \mu(\sigma_{E_v} = \bm{0}),
  \end{align*}
  where $(*)$ is because of \cref{eq:choice-of-ei} and the optimal coupling of $(\sigma_e,\sigma'_e)$.

  To summarize, Case~(2) happens with probability at least $\mu_{E_v}(\bm{0})$ and the $\ell_1$ distance is $0$ in this case;
  Case~(1) happens with probability at most $1-\mu_{E_v}(\bm{0})$ and the expected $\ell_1$ distance is at most $1 + (P_{\max} - 1) = P_{\max}$, where $1$ comes from the edge $\{u,v\}$ and $P_{\max} - 1$ comes from induction hypothesis.
  Therefore, the expected $\ell_1$ distance produced by the root call is at most
  \begin{align}\label{eq:lemma-all0}
    \left( 1-\mu_{E_v}(\bm{0}) \right) P_{\max}
    \le P_{\max} - 1,
  \end{align}
  where the inequality follows from $\mu_{E_v}(\bm{0}) \ge 1/P_{\max}$ by \cref{lem:all-zero-prob}, whose proof is technical and postponed to \cref{subsec:marginal-bounds}.
  This completes the induction.
\end{proof}

Note that as long as it holds $\mu_{E_v}(\bm{0}) = \Omega(1)$ one can deduce $O(1)$ Wasserstein distance using the inductive argument in \cref{prop:holant-coupling}.

We now present and prove \cref{lem:log-concave-monotone} which is crucially used in \cref{line:alg-couple-holant:choose-edge} of \cref{alg:couple-holant}.

\begin{lemma} \label{lem:log-concave-monotone}
Let $(G,\f,\blam)$ be a Holant problem and $v \in V$ be a vertex such that $\mu=\mu_{G,\f,\blam}$ and $\mu'=\mu_{G,\cD_v\f,\blam}$ are both well-defined.
If $f_v$ is a log-concave signature, then we have
\begin{align} \label{eqn:lem-log-concave-monotone}
  \bE_{\sigma\sim\mu} |\sigma\cap E_v| \ge \bE_{\sigma'\sim\mu'} |\sigma'\cap E_v|.
\end{align}
In particular, there exists $e\in E_v$ such that
\begin{align} \label{eq:exists-e}
  \mu(\sigma_e=1) \ge \mu'(\sigma'_e=1).
\end{align}
\end{lemma}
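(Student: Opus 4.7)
The plan is as follows. The second assertion \eqref{eq:exists-e} follows from the first \eqref{eqn:lem-log-concave-monotone} by linearity of expectation, since $\bE_\mu|\sigma\cap E_v|=\sum_{e\in E_v}\mu(\sigma_e=1)$ and similarly for $\mu'$; at least one edge in $E_v$ must then inherit the inequality between the two sums. So the real work is in \eqref{eqn:lem-log-concave-monotone}.

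I will reduce \eqref{eqn:lem-log-concave-monotone} to a one-dimensional inequality about the random variable $X := |\sigma \cap E_v|$. Define
\[
h(k) := \sum_{\substack{\sigma \subseteq E \\ |\sigma \cap E_v| = k}} \prod_{u \neq v} f_u(|\sigma \cap E_u|) \prod_{e \in \sigma} \lambda_e,
\]
the total weight of configurations with $X=k$ after stripping the factor at $v$. Since $\mu$ and $\mu'$ differ only in the signature at $v$, and that signature depends on $\sigma$ solely through $X$, the law of $X$ is proportional to $f_v(k) h(k)$ under $\mu$ and to $f_v(k+1) h(k)$ under $\mu'$. Thus \eqref{eqn:lem-log-concave-monotone} becomes
\[
\frac{\sum_k k\, f_v(k) h(k)}{\sum_k f_v(k) h(k)} \;\ge\; \frac{\sum_k k\, f_v(k+1) h(k)}{\sum_k f_v(k+1) h(k)}.
\]

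Clearing denominators and symmetrizing in the pair of summation indices in the standard way, the above is equivalent to
\[
\sum_{k > l} (k - l)\, h(k)\, h(l) \bigl[\, f_v(k) f_v(l+1) - f_v(l) f_v(k+1)\, \bigr] \ge 0,
\]
which I will establish term by term. Since $h\ge 0$ and $k-l>0$, the remaining pointwise claim is that $f_v(k) f_v(l+1) \ge f_v(l) f_v(k+1)$ whenever $k > l$.

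Finally, I will deduce this pointwise inequality from the log-concavity hypothesis. When all four values are positive it is equivalent to $f_v(l+1)/f_v(l) \ge f_v(k+1)/f_v(k)$, which holds because log-concavity makes the ratio $f_v(i+1)/f_v(i)$ non-increasing in $i$. The no-internal-zeros hypothesis closes the degenerate cases: if $f_v(l)>0$ and $f_v(k+1)>0$, then every integer in $[l,k+1]$ lies in the support, so all four values are positive and the positive case applies; otherwise the right-hand side vanishes and the inequality is trivial. The only real subtlety in the whole argument is the bookkeeping around zeros of $f_v$, and the no-internal-zeros condition defuses it cleanly; no probabilistic input beyond log-concavity is required.
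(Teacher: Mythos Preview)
Your proof is correct and follows essentially the same route as the paper: your $h(k)$ is exactly the paper's $\Phi_k$, and both arguments write the two expectations as ratios, cross-multiply, symmetrize in $(k,\ell)$, and reduce to the pointwise inequality $f_v(k)f_v(\ell+1)\ge f_v(\ell)f_v(k+1)$ for $k>\ell$, which follows from log-concavity. If anything, your write-up is a bit more explicit than the paper's about how the no-internal-zeros condition handles the degenerate cases.
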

\begin{proof}
  Let $d_v = |E_v|$ be the degree of $v$ in $G$.
  For $0\le k \le d_v$, define
  \begin{align*}
    \Phi_k := \sum_{\sigma\in 2^E:\, |\sigma \cap E_v| = k} \prod_{u\in V\backslash v} f_u(|\sigma\cap E_u|) \prod_{e\in E:\, \sigma_e=1} \lambda_e.
  \end{align*}
  Then we have that
  \begin{align*}
    \bE_{\sigma\sim\mu} |\sigma\cap E_v| &= \frac{\sum_{k=0}^{d_v} k f_v(k) \Phi_k}{\sum_{k=0}^{d_v} f_v(k) \Phi_k}, \\
    \bE_{\sigma'\sim\mu'} |\sigma'\cap E_v| &= \frac{\sum_{k=0}^{d_v} k f_v(k+1) \Phi_k}{\sum_{k=0}^{d_v} f_v(k+1) \Phi_k}.
  \end{align*}
  It follows that
  \begin{align*}
    &\left(\sum_{k=0}^{d_v} k f_v(k) \Phi_k\right) \left(\sum_{\ell=0}^{d_v} f_v(\ell+1) \Phi_\ell\right)
    - \left(\sum_{k=0}^{d_v} f_v(k) \Phi_k\right) \left(\sum_{\ell=0}^{d_v} \ell f_v(\ell+1) \Phi_\ell \right) \\
    ={}& \sum_{k,\ell=0}^{d_v} (k-\ell) f_v(k) f_v(\ell+1) \Phi_k\Phi_\ell  \\
    ={}& \frac 12 \sum_{k,\ell=0}^{d_v} (k-\ell) \big( f_v(k) f_v(\ell+1) - f_v(k+1) f_v(\ell) \big) \Phi_k \Phi_\ell
    \ge 0,
  \end{align*}
  where the last step is because
  of the log-concavity assumption of $f_v$.
  Therefore~\cref{eqn:lem-log-concave-monotone} holds as desired, and \cref{eq:exists-e} is an immediate consequence of \cref{eqn:lem-log-concave-monotone}.
\end{proof}

We are now ready to prove \cref{lem:SI} on spectral independence.

\begin{proof}[Proof of \cref{lem:SI}]
We prove spectral independence via \cref{lem:coup-indep-imply-spec-indep}.
Let $\tau$ be a pinning on $\Lambda \subseteq E$ and let $e = \{u,v\} \in E \setminus \Lambda$ such that $0<\mu^\tau_e(1)<1$.
For two pinnings $\tau' = \tau \cup \{e \gets 0\}$, $\tau'' = \tau \cup \{e \gets 1\}$ on $\Lambda \cup \{e\}$ which differ only at $e$, we consider the two conditional distributions
\begin{align*}
 \mu^{\tau'} &= \mu_{G,\f,\blam}(\cdot \mid \sigma_\Lambda = \tau, \sigma_e = 0) = \mu_{\wt{G},\wt{\f},\wt{\blam}}, \\
 \mu^{\tau''} &= \mu_{G,\f,\blam}(\cdot \mid \sigma_\Lambda = \tau, \sigma_e = 1) = \mu_{\wt{G},\cD_v \cD_u \wt{\f},\wt{\blam}},
\end{align*}
where $\wt{G} = G \setminus \Lambda \setminus \{e\}$, $\wt{\f} = \cD_\tau \f$, and $\wt{\blam} = \blam_{E \setminus \Lambda \setminus\{e\}}$.
By \cref{prop:holant-coupling} and the triangle inequality, we deduce that
  \begin{align*}
      W_1( \mu^{\tau'}, \mu^{\tau''} )
      \le W_1\left( \mu_{\wt{G},\wt{\f},\wt{\blam}}, \mu_{\wt{G},\cD_u \wt{\f},\wt{\blam}} \right)
      + W_1\left( \mu_{\wt{G},\cD_u \wt{\f},\wt{\blam}}, \mu_{\wt{G},\cD_v \cD_u \wt{\f},\wt{\blam}} \right)
      \le 2 (P_{\max}-1).
  \end{align*}
Note that all these Holant sub-problems under downward shifting operators still satisfy the assumptions of \cref{thm:holant-mixing} by \cref{obs:pinning}.
Therefore, by \cref{lem:coup-indep-imply-spec-indep} $\mu_{G,\f,\blam}$ is $2(P_{\max}-1)$-spectrally independent.
\end{proof}

\subsection{Marginal bounds}
\label{subsec:marginal-bounds}

We first lower bound the probability that all adjacent edges of a vertex are unoccupied, which justifies \cref{eq:lemma-all0}.
\begin{lemma} \label{lem:all-zero-prob}
Let $(G,\f,\blam)$ be a Holant problem satisfying the assumptions of \cref{thm:holant-mixing}.
Then for any vertex $v$, we have
\begin{align*}
  \mu_{E_v}(\bm{0}) \ge \frac{1}{P_v(r_{\max} \lambda_{\max})}.
\end{align*}
\end{lemma}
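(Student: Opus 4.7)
The plan is to write $\mu_{E_v}(\bm{0}) = Z_\emptyset / Z$, where $Z_\emptyset$ is the contribution of configurations with $\sigma \cap E_v = \emptyset$, and then upper bound $Z / Z_\emptyset$ by $P_{f_v}(r_{\max}\lambda_{\max})$ through a direct manipulation of the partition function.

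First I would decompose the partition function by conditioning on $T := \sigma \cap E_v$:
\[
Z \;=\; \sum_{T \subseteq E_v} f_v(|T|)\,\Big(\prod_{e \in T} \lambda_e\Big)\, \wt Z_T,
\]
where $\wt Z_T$ is the Holant partition function on $G\setminus v$ (remove $v$ and its incident edges) with the signature at each neighbor $u$ of $v$ replaced by $\cD^{m_u(T)} f_u$, and $m_u(T)$ is the number of edges in $T$ incident to $u$. In particular $Z_\emptyset = f_v(0)\,\wt Z_\emptyset$, so it suffices to show
\[
\wt Z_T \;\le\; r_{\max}^{|T|}\,\wt Z_\emptyset \qquad \text{for every } T \subseteq E_v.
\]

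The key estimate is the termwise inequality $(\cD f_u)(k) = f_u(k+1) \le r_{\max}\, f_u(k)$ valid for every neighbor $u$ of $v$ and every $k \ge 0$. If $f_u(k)>0$ this follows from the ``decreasing ratios'' consequence of log-concavity (combined with $f_u(0)>0$), namely $f_u(k+1)/f_u(k) \le f_u(1)/f_u(0) \le r_{\max}$; and if $f_u(k)=0$ then the no-internal-zeros condition plus $f_u(0)>0$ forces $f_u(k+1)=0$ so both sides vanish. Iterating gives $\cD^{m} f_u(k) \le r_{\max}^{m}\, f_u(k)$ for every $m \ge 0$. Applying this inequality inside the sum defining $\wt Z_T$ at each neighbor $u$ yields the claimed bound, since $\sum_u m_u(T) = |T|$ and all other factors (signatures at non-neighbors and edge weights on $E\setminus E_v$) are unchanged.

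Combining the two displays and bounding $\prod_{e\in T}\lambda_e \le \lambda_{\max}^{|T|}$, we obtain
\[
\frac{Z}{Z_\emptyset} \;\le\; \frac{1}{f_v(0)} \sum_{T\subseteq E_v} f_v(|T|)\,(r_{\max}\lambda_{\max})^{|T|} \;=\; \frac{1}{f_v(0)}\sum_{k=0}^{d_v}\binom{d_v}{k} f_v(k)(r_{\max}\lambda_{\max})^k \;=\; P_{f_v}(r_{\max}\lambda_{\max}),
\]
and inverting proves the lemma. I do not expect a real obstacle here; the one subtlety worth stating carefully is the multi-edge/repeated-neighbor case, where the same $u$ can be shifted several times. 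The iterated log-concavity bound $\cD^{m}f_u(k)\le r_{\max}^{m}f_u(k)$ handles this uniformly, and replacing the local ratio $f_u(1)/f_u(0)$ by the global $r_{\max}$ is precisely why the definition of $r_{\max}$ takes a maximum over all vertices.
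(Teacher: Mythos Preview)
Your proof is correct and essentially the same as the paper's: both compare the weight of a configuration with $\sigma\cap E_v=T$ to the corresponding configuration with $\sigma\cap E_v=\emptyset$ via the log-concavity bound $f_u(k+1)\le r_{\max}f_u(k)$ at each neighbor, then sum to obtain $Z/Z_\emptyset\le P_{f_v}(r_{\max}\lambda_{\max})$. The only cosmetic difference is that the paper phrases the comparison as a per-configuration ratio $\mu(\sigma\cup\tau)/\mu(\sigma)$, whereas you package it through the auxiliary partition functions $\wt Z_T$ on $G\setminus v$; the underlying inequality and the final summation are identical.
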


\begin{proof}
For any $\sigma \in 2^{E \setminus E_v}$ and $\tau \in 2^{E_v}$ viewed as subsets,
suppose $\tau = \{e_1,\dots,e_k\} \subseteq E_v$ where $k=|\tau|$ and $e_i=\{u_i,v\}$ for each $i$,
and we have that
\begin{align*}
\frac{\mu(\sigma \cup \tau)}{\mu(\sigma)}
&\le
\frac{f_v(k)}{f_v(0)}
\left( \prod_{i=1}^k \max_{1\le \ell \le d_{u_i}} \frac{f_{u_i}(\ell)}{f_{u_i}(\ell-1)} \right)
\left( \prod_{i=1}^k \lambda_{e_i} \right) \\
&=
\frac{f_v(k)}{f_v(0)}
\prod_{i=1}^k \left(\frac{f_{u_i}(1)}{f_{u_i}(0)} \lambda_{e_i}\right) \\
&\le \frac{f_v(k)}{f_v(0)}(r_{\max} \lambda_{\max})^{k}.
\end{align*}
Summing over $\sigma$ and $\tau$, we obtain that
\begin{align*}
1 &= \sum_{\sigma \in 2^{E \setminus E_v}} \sum_{\tau \in 2^{E_v}} \mu(\sigma \cup \tau) \\
&\le \sum_{\sigma \in 2^{E \setminus E_v}} \sum_{\tau \in 2^{E_v}} \frac{f_v(|\tau|)}{f_v(0)}(r_{\max} \lambda_{\max})^{|\tau|} \mu(\sigma) \\
&= \sum_{\sigma \in 2^{E \setminus E_v}} \mu(\sigma) \sum_{\tau \in 2^{E_v}} \frac{f_v(|\tau|)}{f_v(0)}(r_{\max} \lambda_{\max})^{|\tau|} \\
&\le \mu_{E_v}(\bm{0}) P_v(r_{\max} \lambda_{\max}),
\end{align*}
as claimed.
\end{proof}

We now give marginal lower bounds.
\begin{proposition}[Marginal boundedness]
\label{prop:marginal-bound}
  Let $(G,\f,\blam)$ be a Holant problem satisfying the assumptions of \cref{thm:holant-mixing}.
  Then for any pinning $\tau$ on a subset $\Lambda \subseteq E$ and for any edge $e \in E \setminus \Lambda$ that can be occupied, we have
  \begin{align*}
    \mu^\tau(\sigma_e=0) \ge \frac{1}{P_{\max}}
    \quad\text{and}\quad
    \mu^\tau(\sigma_e=1) \ge \frac{r_{\min}^2 \lambda_{\min}}{P_{\max}^2}.
  \end{align*}
\end{proposition}
\begin{proof}
By the monotonicity of pinning given in \cref{obs:pinning} it suffices to focus on the no-pinning case.
  Let $e=\{u,v\}$.
  The first part follows from \cref{lem:all-zero-prob}. Let us prove the second part.

  By applying \cref{lem:all-zero-prob} twice, we have
  \begin{align*}
    \mu(\sigma_{E_u\cup E_v} = \bm{0}) = \mu(\sigma_{E_u} = \bm{0}) \cdot \mu(\sigma_{E_v \setminus E_u} = \bm{0} \mid \sigma_{E_u} = \bm{0})
    \ge \frac{1}{P_{\max}^2},
  \end{align*}
  where we note that $\mu(\cdot \mid  \sigma_{E_u} = \bm{0})$ corresponds to an induced Holant problem whose $P_{\max}$ is no bigger by \cref{obs:pinning}.
  Then we have
  \begin{align*}
    \mu(\sigma_e = 1) & \ge \mu(\sigma_e=1, \sigma_{E_u\cup E_v \setminus \{e\}}=\bm{0}) \\
    & = \frac{f_u(1)f_v(1)}{f_u(0)f_v(0)} \lambda_e \cdot \mu(\sigma_{E_u\cup E_v}=\bm{0}) \\
    &\ge \frac{r_{\min}^2 \lambda_{\min}}{P_{\max}^2}.
  \end{align*}
  This verifies the second part.
\end{proof}

The following lemma provides an upper bound on $P_{\max}$ which in turn gives simple and clean constant bounds on spectral independence and marginal boundedness.
\begin{lemma} \label{lem:P-max-bound}
For a Holant problem $(G,\f,\blam)$ with log-concave signatures and $f_v(0)>0$ for all $v\in V$, we have
\[
P_{\max} \le \left( r_{\max}^{2} \lambda_{\max} + 1 \right)^\Delta.
\]
\end{lemma}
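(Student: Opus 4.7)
The plan is to reduce the bound to a single elementary computation at each vertex and then take the maximum. Fix an arbitrary vertex $v$ and let $d = d_v \le \Delta$. By the definition of $P_{\max}$ it suffices to show that $P_{f_v}(r_{\max}\lambda_{\max}) \le (1 + r_{\max}^2 \lambda_{\max})^d$, and then take the max over $v$.

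The main step is to use log-concavity of $f_v$ to bound the coefficients $f_v(k)/f_v(0)$ geometrically by powers of $r_{\max}$. Specifically, letting $b_v := \max\{k : f_v(k) > 0\}$ (so that by the no-internal-zeros condition the support of $f_v$ is $\{0, 1, \dots, b_v\}$), I would argue that the ratios $f_v(k)/f_v(k-1)$ are non-increasing in $k$ for $1 \le k \le b_v$: this is exactly the inequality $f_v(k)^2 \ge f_v(k-1)f_v(k+1)$ rewritten as $f_v(k+1)/f_v(k) \le f_v(k)/f_v(k-1)$. Telescoping then gives
\[
\frac{f_v(k)}{f_v(0)} = \prod_{i=1}^{k} \frac{f_v(i)}{f_v(i-1)} \le \left( \frac{f_v(1)}{f_v(0)} \right)^{k} \le r_{\max}^{k}
\]
for every $0 \le k \le b_v$, and this bound extends trivially to $k > b_v$ where $f_v(k) = 0$.

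Plugging this into the definition of $P_{f_v}$ and applying the binomial theorem gives
\[
P_{f_v}(r_{\max}\lambda_{\max}) = \sum_{k=0}^{d} \binom{d}{k} \frac{f_v(k)}{f_v(0)} (r_{\max}\lambda_{\max})^{k} \le \sum_{k=0}^{d} \binom{d}{k} (r_{\max}^{2}\lambda_{\max})^{k} = (1 + r_{\max}^{2}\lambda_{\max})^{d}.
\]
Since $d \le \Delta$ and the right-hand side is $\ge 1$, the bound $(1 + r_{\max}^2 \lambda_{\max})^\Delta$ follows. Taking the maximum over all $v \in V$ gives the claim.

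The only subtlety is the handling of the support: one needs to check that the telescoping argument does not break down because of a zero entry in the denominator. This is exactly where the no-internal-zeros hypothesis in \cref{def:logconcave} is used — it guarantees that all ratios $f_v(i)/f_v(i-1)$ appearing in the telescoping product are well-defined and positive for $i \le b_v$, while for $k > b_v$ the term $f_v(k)/f_v(0)$ simply vanishes and the inequality holds trivially. No other obstacle is expected; the rest is routine.
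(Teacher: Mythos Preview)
Your proof is correct and follows essentially the same route as the paper: bound $f_v(k)/f_v(0)\le r_{\max}^k$ via the telescoping product of log-concave ratios, plug into the definition of $P_{f_v}$, and close with the binomial theorem and $d_v\le\Delta$. Your extra remarks on the no-internal-zeros hypothesis are accurate but not strictly needed, since for $k$ beyond the support the inequality $f_v(k)/f_v(0)\le r_{\max}^k$ holds trivially with left side zero.
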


\begin{proof}
We note that for any vertex $v$,
\[
\frac{f_v(k)}{f_v(0)} = \prod_{i=1}^k \frac{f_v(i)}{f_v(i-1)} \le \left( \frac{f_v(1)}{f_v(0)} \right)^k \le r_{\max}^k
\]
by the log-concavity of $f_v$.
Hence, we deduce that
\[
P_{f_v}(x) = \frac{1}{f_v(0)}\sum_{k=0}^{d_v} \binom{d_v}{k} f_v(k) x^k
\le \sum_{k=0}^{d_v} \binom{d_v}{k} r_{\max}^k x^k
= \left( 1+r_{\max}x \right)^{d_v}.
\]
Letting $x = r_{\max}\lambda_{\max}$ and taking maximum over $v$ gives the desired bound.
\end{proof}

\subsection{Proofs of main results}
\label{subsec:proof-main}

\begin{proof}[Proof of \cref{thm:holant-mixing}]
  By \cref{lem:SI}, $\mu$ is $O(P_{\max})$-spectrally independent.
  By \cref{prop:marginal-bound}, $\mu$ is $b$-marginally bounded for $b = b(P_{\max},r_{\min},\lambda_{\min})>0$.
  Using \cref{thm:spec-indep-imply-mixing}, we finish the proof.
\end{proof}

\begin{proof}[Proof of \cref{thm:b-matching-mixing}]
  Because $[1,\dots,1,0,\dots,0]$ is a log-concave signature, we can apply \cref{thm:holant-mixing}.
  For the $b$-matching problem $(G,\b,\lambda)$, we have $r_{\max}=r_{\min}=1$ and $\lambda_{\max}=\lambda_{\min}=\lambda$.
  Therefore $P_{\max}$ is upper bounded by a function of $\Delta$ and $\lambda$.
  So \cref{thm:b-matching-mixing} is a corollary of \cref{thm:holant-mixing}.
\end{proof}

\begin{remark}\label{rmk:SI-b-matching}
Note that for uniformly random $b$-matchings on a graph of maximum degree $\Delta$ where $1\le b < \Delta$, we have $r_{\max} = r_{\min} = 1$, $\lambda_{\max} = \lambda_{\min} = 1$, and
\[
P_{\max} \le \sum_{k=0}^b \binom{\Delta}{k}.
\]
Hence, for small $b \ll \Delta$ we have $O(\Delta^b)$-spectral independence, and for all $b$ we have $O(2^\Delta)$-spectral independence.
\end{remark}

\subsection{Further discussions}

\cref{thm:holant-mixing} establishes spectral independence for a broad class of Holant problems with log-concave signatures.
The support of the corresponding Gibbs distribution is over all $\b$-matchings for some $\b \in \bN^V$.
Note that $O(1)$-spectral independence fails for the signature $[0,1,0,\dots,0]$ which corresponds to perfect matchings.
Moreover, it also fails for the signature $[0,1,\dots,1,0]$ in the $\ell_\infty$ sense (maximum absolute row sum of influence matrices), as illustrated by the examples below.

\begin{itemize}
\item Consider a path $v_0 \leftrightarrow v_1 \leftrightarrow \dots \leftrightarrow v_n$.
Suppose the signature at $v_{2i-1}$ is $[1,1,0]$ for all $i \ge 1$, i.e., requiring at least one adjacent edge unoccupied, and the signature at $v_{2i}$ is $[0,1,1]$ for all $i \ge 1$, i.e., requiring at least one adjacent edge occupied.
If $\sigma(v_0v_1)=1$, then there is only one feasible configuration where $\sigma(v_{2i-1}v_{2i}) = 0$ and $\sigma(v_{2i}v_{2i+1}) = 1$.
If $\sigma(v_0v_1)=0$, then one can check that there are $n$ feasible configurations.
The absolute sum of influences of edge $v_0v_1$ on all other edges is $\Omega(n)$.

\item Consider a path $v_0 \leftrightarrow v_1 \leftrightarrow \dots \leftrightarrow v_n$ together with edges $u_iv_i$ where $1\le i< n$.
Suppose every vertex $v_i$ has signature $[0,1,1,0]$ for $i\ge 1$.
Fix $\sigma(u_{2i-1}v_{2i-1}) = 1$ and $\sigma(u_{2i}v_{2i}) = 0$ for all $i\ge 1$ as a pinning,
then the resulted Holant problem is equivalent to the previous example.
 So the absolute sum of influences of $v_0v_1$ is $\Omega(n)$.
\end{itemize}


\section{Hardcore model on claw-free graphs} \label{sec:hardcore-claw-free}
In this section we prove \cref{thm:hardcore-claw-free} from introduction. The proof idea is similar to \cref{thm:holant-mixing} on a high level. We establish an upper bound on the Wasserstein $W_1$ distance between distributions under different pinnings, which implies spectral independence and fast mixing via \cref{lem:coup-indep-imply-spec-indep} and \cref{thm:spec-indep-imply-mixing}. The $W_1$ upper bound is proved using a recursive coupling (\cref{alg:couple-hardcore-claw-free}), which shares some similarities with \cref{alg:couple-holant} but requires new ideas to utilize claw-freeness of the graph.

While the hardcore model is a distribution on $2^V$ rather than $2^E$, notations in \cref{sec:prelim} can be applied here with simple changes. For example, we view a subset $I\subseteq V$ equivalently as a binary indicator vector $\sigma = \bm{1}_I \in \{0,1\}^V$. We similarly define pinnings, marginal boundedness, influence matrices, and spectral independence for the hardcore model; importantly, \cref{thm:spec-indep-imply-mixing} and \cref{lem:coup-indep-imply-spec-indep} still hold.
We refer to \cite{ALO20,chen2021optimal} for formal definitions and precise statements.

\subsection{Preliminaries on claw-free graphs}

A graph is called \emph{claw-free} if it does not contain $K_{1,3}$ (that is, a star graph comprising three edges, three leaves, and a central vertex) as an induced subgraph.
Note that any induced subgraph of a claw-free graph is still claw-free by definition.
The class of claw-free graphs includes in particular all line graphs.

For a vertex $v$ of a graph $G=(V,E)$,
let $N_v = \{u \in V: (u,v) \in E\}$ denote the neighborhood of $v$, and let $N^*_v = v \cup N_v$ denote the closed neighborhood including $v$ itself.
We say a vertex $v$ is \emph{simplicial} if its neighborhood $N_v$ (or, equivalently, its closed neighborhood $N^*_v$) forms a clique, i.e., every two neighbors of $v$ are adjacent.
The following simple lemma is very helpful to us.
\begin{claim}\label{lem:claw-free-fact}
Suppose $G = (V,E)$ is a claw-free graph and $v \in V$ is a vertex.
Let $u \in N_v$ be a neighbor of $v$.
Then the subgraph $G \setminus (N^*_v \setminus u)$ is claw-free and $u$ is a simplicial vertex of $G \setminus (N^*_v \setminus u)$.
\end{claim}

\begin{proof}
The subgraph $G \setminus (N^*_v \setminus u)$ is claw-free since it is an induced subgraph on $V \setminus (N^*_v \setminus u)$.
Suppose for contradiction that $u$ is not simplicial in $G \setminus (N^*_v \setminus u)$.
By definition there exist two neighbors $w_1,w_2$ of $u$ in $G \setminus (N^*_v \setminus u)$ that are not adjacent.
Then, $\{u,v,w_1,w_2\}$ forms a claw centered at $u$ in $G$: $u$ is adjacent to all of $v,w_1,w_2$ by our choice, and $w_1,w_2$ are not adjacent to $v$ since $w_1,w_2 \notin N^*_v$. This is a contradiction and hence $u$ must be simplicial.
\end{proof}

\subsection{Spectral independence}

We establish spectral independence in this subsection.
For a vertex $v$ and a spin $i \in \{0,1\}$, we use the notation $v \gets i$ to represent the pinning $\sigma_v = i$.
Hence, if $\Lambda \subseteq V$ is a subset of vertices, $\tau$ is a pinning on $\Lambda$, and $v \in V \setminus \Lambda$, then $\mu^{\tau,v\gets i}$ represents the conditional Gibbs distribution on the subset $V \setminus \Lambda \setminus v$ conditioned on both $\sigma_\Lambda = \tau$ and $\sigma_v = i$.

\begin{proposition}[Spectral independence] \label{prop:hardcore-claw-free-spec-indep}
  Work under the setting of \cref{thm:hardcore-claw-free}.
  For any pinning $\tau$ on a subset $\Lambda\subseteq V$ and any vertex $v \in V \setminus \Lambda$, we have
  \begin{align*}
    W_1\left( \mu^{\tau,v\gets 0}, \mu^{\tau,v\gets 1} \right) \le 2(1+\Delta\lambda_{\max}).
  \end{align*}
  In particular, $\mu$ is $2(1+\Delta\lambda_{\max})$-spectrally independent.
\end{proposition}

As a standard trick for the hardcore model, we can view the the conditional Gibbs distribution under a pinning $\tau$ on $\Lambda \subseteq V$ as a hardcore model on an induced subgraph by removing all pinned vertices in $\Lambda$ together with all neighbors of those fixed to be occupied (pinned to spin $1$).
Note that the maximum degree of the resulting subgraph does not increase.
Thus, for simplicity we consider the case without pinnings except at $v$.

We first argue that for claw-free graphs, to prove \cref{prop:hardcore-claw-free-spec-indep} it suffices to consider only when $v$ is a simplicial vertex.
Sample $\xi \sim \mu_{N_v}^{v\gets 0}$ and $\xi' \sim \mu_{N_v}^{v\gets 1}$ two configurations on the neighborhood of $v$.
Observe that we must have $\xi'=\bm{0}$ since $v$ is occupied.
Suppose $\|\xi\|_1 = m$ and denote the occupied vertices in $\xi$ by $\{u_1,\dots,u_m\}$ under any ordering. We must have $m\le 2$; otherwise, since there can be no edge between any pair of occupied vertices, the set $\{v,u_1,u_2,u_3\}$ forms a claw.
For $0\le i \le m$, let $\xi^i$ be the configuration on $N_v$ with $\{u_1,\dots,u_i\}$ occupied and all other vertices in $N_v$ unoccupied, so $\xi^0 = \xi' = \bm{0}$ and $\xi^m = \xi$.
We note that each $\xi^i$ is feasible because $\xi$ is feasible.
We then deduce from the triangle inequality that
\begin{align}
W_1\left( \mu^{v \gets 0}, \mu^{v \gets 1} \right)
&\le \bE[\| \xi-\xi' \|_1] + \bE\left[ W_1\left( \mu^{v \gets 0, \xi}, \mu^{v \gets 1, \xi'} \right) \right] \nonumber\\
&=\bE[\| \xi-\xi' \|_1] + \bE\left[ W_1\left( \mu^{v \gets 0, \xi^m}, \mu^{v \gets 0, \xi^0} \right) \right] \nonumber\\
&\le 2 + \bE\left[ \sum_{i=1}^m  W_1\left( \mu^{v \gets 0, \xi^{i-1}}, \mu^{v \gets 0, \xi^{i}} \right) \right]. \label{eq:only-sim}
\end{align}
For each $i$, the Wasserstein distance $W_1\big( \mu^{v \gets 0, \xi^{i-1}}, \mu^{v \gets 0, \xi^{i}} \big)$ corresponds to a hardcore model on the subgraph $G \setminus (N^*_v \setminus u_{i}) \setminus \bigcup_{j=1}^{i-1} N_{u_j}$ under two pinnings $u_i \gets 0$ and $u_i \gets 1$; the pinning on $v$ does not matter since the configuration on $N_v$ is fixed.
In particular, $u_i$ is a simplicial vertex in this subgraph since it is already simplicial in $G\setminus (N^*_v \setminus u_i)$ by \cref{lem:claw-free-fact}.
Thus, \cref{eq:only-sim} shows that it suffices to consider the case where the disagreeing vertex is simplicial.

We show in the next proposition that the Wasserstein distance is constant between Gibbs distributions under different pinnings on a simplicial vertex.

\begin{proposition}\label{prop:W1-sim-cf}
Work under the setting of \cref{thm:hardcore-claw-free}.
If $v$ is a simplicial vertex, then we have
\begin{align*}
W_1\left( \mu^{v\gets 0}, \mu^{v\gets 1} \right) \le \Delta\lambda_{\max}.
\end{align*}
\end{proposition}

We can then deduce \cref{prop:hardcore-claw-free-spec-indep} from \cref{prop:W1-sim-cf} by the arguments above.
\begin{proof}[Proof of \cref{prop:hardcore-claw-free-spec-indep}]
Combining \cref{eq:only-sim,prop:W1-sim-cf}, we deduce that
\[
W_1(\mu^{v \gets 0}, \mu^{v \gets 1})
\le
2 +
2 \Delta \lambda_{\max}.
\]
Spectral independence then follows from \cref{lem:coup-indep-imply-spec-indep} and the bound on $W_1$.
\end{proof}

\begin{algorithm}[t]
\caption{Coupling procedure for the hardcore model on claw-free graphs}
\label{alg:couple-hardcore-claw-free}
\begin{algorithmic}[1]
\Procedure{Couple}{$G,\blam,v$}
\State \textbf{Input:} $G=(V,E)$ a claw-free graph, $\blam \in \bR_{>0}^V$ a vector of fugacity, $v\in V$ a simplicial vertex which is disagreeing
\State \textbf{Output:} A pair of random configurations $(\sigma,\sigma') \in 2^{V \setminus v} \times 2^{V \setminus v}$ drawn from a coupling between $\mu_{G,\blam}^{v \gets 0}$ and $\mu_{G,\blam}^{v \gets 1}$

\medskip
\If{$v$ is isolated}
\State Sample $\sigma \sim \mu_{G,\blam}^{v \gets 0}$
\State \Return $(\sigma,\sigma)$
\label{line:iso}
\EndIf

\State Sample $\sigma_{N_v}\sim \big( \mu_{G,\blam}^{v \gets 0} \big)_{N_v}$
and $\sigma'_{N_v} \gets \bm{0}$

\If{$\sigma_{N_v} = \bm{0}$}
\label{line:alg-couple-cf:coupled-no-edge}
\State Sample $\sigma_{V \setminus N^*_v} \sim \mu_{G,\blam}^{v \gets 0, N_v \gets \bm{0}}$
\State \Return $(\sigma,\sigma)$
\label{line:alg-couple-cf:coupled}

\Else\Comment{We must have $\|\sigma_{N_v}\|_1 = 1$}
\label{line:alg-couple-cf:uncoupled-edge}
\State Let $u \in N_v$ such that $\sigma_u = 1$
\State $\big( \sigma'_{V \setminus (N^*_v \setminus u)}, \sigma_{V \setminus (N^*_v \setminus u)} \big) \gets \textsc{Couple}\big( G \setminus (N^*_v \setminus u), \blam_{V \setminus (N^*_v \setminus u)}, u \big)$
\Comment{\cref{lem:claw-free-fact}}
\label{line:alg-call-fact}
\State \Return $(\sigma,\sigma')$
\EndIf

\EndProcedure
\end{algorithmic}
\end{algorithm}

It remains to prove \cref{prop:W1-sim-cf}, which is again proved by a recursive coupling.
\begin{proof}[Proof of \cref{prop:W1-sim-cf}]

We construct a coupling between $\mu_{G,\blam}^{v\gets 0}$ and $\mu_{G,\blam}^{v\gets 1}$ for configurations on $V \setminus v$ using \cref{alg:couple-hardcore-claw-free}, and upper bound the $W_1$ distance via claw-freeness of the graph.

\paragraph{Coupling.} We prove that \cref{alg:couple-hardcore-claw-free} produces a valid coupling by induction on the number of vertices in $G$.
If $v$ is isolated then the configuration on $V \setminus v$ is independent of $\sigma_v$, and hence $\mu_{G,\blam}^{v\gets 0} = \mu_{G,\blam}^{v\gets 1}$; this justifies \cref{line:iso} and also the base case for our induction.
For non-isolated $v$, we sample $\sigma_{N_v} \sim \big( \mu^{v\gets 0}_{G,\blam} \big)_{N_v}$ and $\sigma'_{N_v} \sim \big( \mu^{v\gets 1}_{G,\blam} \big)_{N_v}$; note that we must have $\sigma'_{N_v} = \bm{0}$ for the latter since $v$ is occupied.
If $\sigma_{N_v} = \bm{0} = \sigma'_{N_v}$, then the configuration on the remaining graph is independent of $\sigma_v$, namely $\mu_{G,\blam}^{v \gets 0, N_v \gets \bm{0}} = \mu_{G,\blam}^{v \gets 1, N_v \gets \bm{0}}$,
which justifies \cref{line:alg-couple-cf:coupled}.
Otherwise, we have $\sigma_{N_v} \neq \bm{0}$.
Since $v$ is simplicial, $N_v$ is a clique and hence there is exactly one vertex in $N_v$ that is occupied under $\sigma_{N_v}$, which we denote by $u$.
Let $A = N^*_v \setminus u$ be the closed neighborhood at $v$ excluding $u$, and we have $\sigma_A = \bm{0}$.
By the induction hypothesis we have in \cref{line:alg-call-fact} that
\begin{align*}
\sigma_{V\setminus A} &\sim \mu_{G \setminus A,\blam_{V \setminus A}}^{u \gets 1} = \mu_{G,\blam}^{v \gets 0, N_v \gets \sigma_{N_v}}; \\
\sigma'_{V\setminus A} &\sim \mu_{G \setminus A,\blam_{V \setminus A}}^{u \gets 0} = \mu_{G,\blam}^{v \gets 1, N_v \gets \sigma'_{N_v}}.
\end{align*}
Notice that we can recursively call \cref{alg:couple-hardcore-claw-free} on the input $( G \setminus A, \blam_{V \setminus A}, u )$ because $G \setminus A$ is a claw-free graph and $u$ is a simplicial vertex by \cref{lem:claw-free-fact}.
Thus, $(\sigma_{V\setminus A}, \sigma'_{V\setminus A})$ comes from the desired conditional distributions and therefore the output of \cref{alg:couple-hardcore-claw-free} is from a coupling of $\mu_{G,\blam}^{v\gets 0}$ and $\mu_{G,\blam}^{v\gets 1}$ by induction.

\paragraph{$W_1$ distance.}
Next, we bound the expected $\ell_1$ distance under the coupling generated by \cref{alg:couple-hardcore-claw-free} by induction on the number of vertices.
Base case is when $v$ is isolated, in which case the $\ell_1$ distance is $0$.
Now suppose that $v$ is not isolated.
Observe that in one run of \cref{alg:couple-hardcore-claw-free}: either $\sigma_{N_v} = \bm{0}$ and the $\ell_1$ distance is $0$, or $\|\sigma_{N_v}\| = 1$ and it recursively calls \cref{alg:couple-hardcore-claw-free} in \cref{line:alg-call-fact} on a smaller instance so that the combined expected $\ell_1$ distance is at most $1+\Delta \lambda_{\max}$ by our induction hypothesis ($1$ for the discrepancy at $u$ and $\Delta\lambda_{\max}$ for the recursive call).
We note that the first case happens with probability exactly $\mu_{G,\blam}^{v\gets 0}(\sigma_{N_v} = \bm{0})$, the probability that all neighbors of $v$ are unoccupied.
Therefore, the expected $\ell_1$ distance produced by the root call is at most
\begin{align}\label{eq:cf-W1-rec}
\left( 1 - \mu_{G,\blam}^{v\gets 0}(\sigma_{N_v} = \bm{0}) \right) (1+\Delta\lambda_{\max}).
\end{align}
It remains to lower bound $\mu_{G,\blam}^{v\gets 0}(\sigma_{N_v} = \bm{0})$.
Since $v$ is simplicial, every feasible configuration $\xi$ on $N_v$ satisfies $\xi = \bm{0}$ or $\|\xi\|_1=1$.
We have
\begin{align*}
1 = \mu_{G,\blam}^{v\gets 0}(\sigma_{N_v} = \bm{0}) + \sum_{\xi \in 2^{N_v}:\, \|\xi\|_1 = 1} \mu_{G,\blam}^{v\gets 0}(\sigma_{N_v} = \xi)
\le (1+\Delta\lambda_{\max}) \cdot \mu_{G,\blam}^{v\gets 0}(\sigma_{N_v} = \bm{0}),
\end{align*}
and thus $\mu_{G,\blam}^{v\gets 0}(\sigma_{N_v} = \bm{0}) \ge (1+\Delta\lambda_{\max})^{-1}$.
Plugging into \cref{eq:cf-W1-rec} finishes the proof.
\end{proof}

\subsection{Marginal bounds}
In this subsection we give marginal bounds that are needed.

\begin{proposition}[Marginal boundedness] \label{prop:hardcore-claw-free-marginal-bound}
  Work under the setting of \cref{thm:hardcore-claw-free}.
  For any pinning $\tau$ on a subset $\Lambda\subseteq V$ and any vertex $v\in V\backslash \Lambda$ that can be occupied, we have
  \begin{align*}
    \mu^\tau(\sigma_v=0) \ge \frac 1{1+\lambda_{\max}}
    \quad\text{and}\quad
    \mu^\tau(\sigma_v=1) \ge \frac{\lambda_{\min}}{(1+\lambda_{\min}) \left(1+\Delta \lambda_{\max} + \binom \Delta 2\lambda_{\max}^2\right)}.
  \end{align*}
\end{proposition}
\begin{proof}
  For any configuration $\sigma_1\in 2^{V\backslash (\Lambda\cup v)}$ with non-zero probability under $\mu^\tau$, we have
  \begin{align*}
    \mu^\tau(\sigma_1, \sigma_v=1) \le \lambda_v \mu^\tau(\sigma_1, \sigma_v=0).
  \end{align*}
  Therefore
  \begin{align*}
    \mu^\tau(\sigma_v=0) = \sum_{\sigma_1 \in 2^{V\backslash (\Lambda\cup v)}} \mu^\tau(\sigma_1, \sigma_v=0)
    \ge \frac 1{1+\lambda_v} \sum_{\sigma_1 \in 2^{V\backslash (\Lambda\cup v)}} \mu^\tau(\sigma_1) = \frac 1{1+\lambda_v} \ge \frac 1{1+\lambda_{\max}}.
  \end{align*}
  This proves the first part.

  For the second part, note that for claw-free graphs, at most two neighboring vertices of $v$ can be occupied at the same time, and hence we have
  \begin{align*}
    1 = \sum_{\xi \in 2^{N_v}:\, \|\xi\|_1 \le 2}\mu^\tau(\sigma_{N_v} = \xi)
    \le \left(1+\Delta \lambda_{\max} + \binom \Delta 2\lambda_{\max}^2\right) \mu^\tau(\sigma_{N_v}=\bm{0}).
  \end{align*}
  So $\mu^\tau(\sigma_{N_v}=\bm{0}) \ge \left(1+\Delta \lambda_{\max} + \binom \Delta 2\lambda_{\max}^2\right)^{-1}$ and
  it follows that
  \begin{align*}
    \mu^\tau(\sigma_v=1) &\ge \mu^\tau(\sigma_v=1, \sigma_{N_v}=\bm{0}) \\
    &= \frac{\lambda_v}{1+\lambda_v} \cdot \mu^\tau(\sigma_{N_v}=\bm{0}) \\
    &\ge \frac{\lambda_{\min}}{(1+\lambda_{\min}) \left(1+\Delta \lambda_{\max} + \binom \Delta 2\lambda_{\max}^2\right)}.
  \end{align*}
  This proves the second part.
\end{proof}

We are now ready to prove \cref{thm:hardcore-claw-free}.
\begin{proof}[Proof of \cref{thm:hardcore-claw-free}]
  Follows from \cref{prop:hardcore-claw-free-spec-indep,prop:hardcore-claw-free-marginal-bound,thm:spec-indep-imply-mixing}.
\end{proof}

\bibliographystyle{alpha}
\bibliography{ref}

\newcommand{\etalchar}[1]{$^{#1}$}
\begin{thebibliography}{GGGHP22}

\bibitem[AJK{\etalchar{+}}22]{AJKPV22}
Nima Anari, Vishesh Jain, Frederic Koehler, Huy~Tuan Pham, and Thuy-Duong
  Vuong.
\newblock Entropic independence: optimal mixing of down-up random walks.
\newblock In {\em Proceedings of the 54th Annual ACM SIGACT Symposium on Theory
  of Computing (STOC)}, pages 1418--1430, 2022.

\bibitem[ALO20]{ALO20}
Nima Anari, Kuikui Liu, and Shayan {Oveis Gharan}.
\newblock Spectral independence in high-dimensional expanders and applications
  to the hardcore model.
\newblock In {\em Proceedings of the 61st Annual IEEE Symposium on Foundations
  of Computer Science (FOCS)}, pages 1319--1330, 2020.

\bibitem[Bar16]{Bar16book}
Alexander Barvinok.
\newblock {\em Combinatorics and Complexity of Partition Functions}, volume~30.
\newblock Springer Algorithms and Combinatorics, 2016.

\bibitem[BCC{\etalchar{+}}22]{BCCPSV22}
Antonio Blanca, Pietro Caputo, Zongchen Chen, Daniel Parisi, Daniel
  {\v{S}}tefankovi{\v{c}}, and Eric Vigoda.
\newblock On mixing of {M}arkov chains: coupling, spectral independence, and
  entropy factorization.
\newblock {\em Electronic Journal of Probability}, 27:1--42, 2022.

\bibitem[BCR21]{BCR21}
Ferenc Bencs, P{\'e}ter Csikv{\'a}ri, and Guus Regts.
\newblock Some applications of {W}agner's weighted subgraph counting
  polynomial.
\newblock {\em The Electronic Journal of Combinatorics}, 28(4), 2021.

\bibitem[BR09]{BR09}
Ivona Bez{\'{a}}kov{\'{a}} and William~A. Rummler.
\newblock Sampling edge covers in 3-regular graphs.
\newblock In {\em Proceedings of the 34th International Symposium on
  Mathematical Foundations of Computer Science (MFCS)}, pages 137--148, 2009.

\bibitem[Br{\"a}15]{Bra15}
Petter Br{\"a}nd{\'e}n.
\newblock Unimodality, log-concavity, real-rootedness and beyond.
\newblock {\em Handbook of enumerative combinatorics}, 87:437, 2015.

\bibitem[CE22]{CE22}
Yuansi Chen and Ronen Eldan.
\newblock Localization schemes: {A} framework for proving mixing bounds for
  {M}arkov chains.
\newblock In {\em Proceedings of the 63rd Annual IEEE Symposium on Foundations
  of Computer Science (FOCS)}, pages 110--122. IEEE, 2022.

\bibitem[CFYZ21]{CFYZ21}
Xiaoyu Chen, Weiming Feng, Yitong Yin, and Xinyuan Zhang.
\newblock Rapid mixing of {G}lauber dynamics via spectral independence for all
  degrees.
\newblock In {\em Proceedings of the 62nd Annual IEEE Symposium on Foundations
  of Computer Science (FOCS)}, pages 137--148. IEEE, 2021.

\bibitem[CFYZ22]{CFYZ22}
Xiaoyu Chen, Weiming Feng, Yitong Yin, and Xinyuan Zhang.
\newblock Optimal mixing for two-state anti-ferromagnetic spin systems.
\newblock In {\em Proceedings of the 63rd Annual IEEE Symposium on Foundations
  of Computer Science (FOCS)}, pages 588--599. IEEE, 2022.

\bibitem[CLMM23]{CLMM23}
Zongchen Chen, Kuikui Liu, Nitya Mani, and Ankur Moitra.
\newblock Strong spatial mixing for colorings on trees and its algorithmic
  applications.
\newblock {\em arXiv preprint arXiv:2304.01954}, 2023.

\bibitem[CLV21]{chen2021optimal}
Zongchen Chen, Kuikui Liu, and Eric Vigoda.
\newblock Optimal mixing of {G}lauber dynamics: Entropy factorization via
  high-dimensional expansion.
\newblock In {\em Proceedings of the 53rd Annual ACM SIGACT Symposium on Theory
  of Computing}, pages 1537--1550, 2021.

\bibitem[CLV22]{chen2022spectral}
Zongchen Chen, Kuikui Liu, and Eric Vigoda.
\newblock Spectral independence via stability and applications to {H}olant-type
  problems.
\newblock In {\em 2021 IEEE 62nd Annual Symposium on Foundations of Computer
  Science (FOCS)}, pages 149--160. IEEE, 2022.

\bibitem[CMM23]{CMM23}
Zongchen Chen, Nitya Mani, and Ankur Moitra.
\newblock From algorithms to connectivity and back: finding a giant component
  in random $k$-{SAT}.
\newblock In {\em Proceedings of the 2023 Annual ACM-SIAM Symposium on Discrete
  Algorithms (SODA)}, pages 3437--3470. SIAM, 2023.

\bibitem[CS07]{CS07}
Maria Chudnovsky and Paul Seymour.
\newblock The roots of the independence polynomial of a clawfree graph.
\newblock {\em Journal of Combinatorial Theory, Series B}, 97(3):350--357,
  2007.

\bibitem[CZ23]{chen2023near}
Xiaoyu Chen and Xinyuan Zhang.
\newblock A near-linear time sampler for the {I}sing model with external field.
\newblock In {\em Proceedings of the 2023 Annual ACM-SIAM Symposium on Discrete
  Algorithms (SODA)}, pages 4478--4503. SIAM, 2023.

\bibitem[DGM21]{DGM21}
Martin Dyer, Catherine Greenhill, and Haiko M{\"u}ller.
\newblock Counting independent sets in graphs with bounded bipartite pathwidth.
\newblock {\em Random Structures \& Algorithms}, 59(2):204--237, 2021.

\bibitem[DHJM21]{DHJM21}
Martin Dyer, Marc Heinrich, Mark Jerrum, and Haiko M\"{u}ller.
\newblock Polynomial-time approximation algorithms for the antiferromagnetic
  {I}sing model on line graphs.
\newblock {\em Combinatorics, Probability and Computing}, pages 1--17, 2021.

\bibitem[FGW22]{FGW22}
Weiming Feng, Heng Guo, and Jiaheng Wang.
\newblock {S}wendsen-{W}ang dynamics for the ferromagnetic {I}sing model with
  external fields.
\newblock {\em arXiv preprint arXiv:2205.01985}, 2022.

\bibitem[GGGHP22]{GGGH22}
Andreas Galanis, Leslie~Ann Goldberg, Heng Guo, and Andr{\'e}s Herrera-Poyatos.
\newblock Fast sampling of satisfying assignments from random $k$-{SAT}.
\newblock {\em arXiv preprint arXiv:2206.15308}, 2022.

\bibitem[GJ18]{GJ18}
Heng Guo and Mark Jerrum.
\newblock Random cluster dynamics for the {I}sing model is rapidly mixing.
\newblock {\em Annals of Applied Probability}, 28(2):1292--1313, 2018.

\bibitem[GLLZ21]{GLLZ21}
Heng Guo, Chao Liao, Pinyan Lu, and Chihao Zhang.
\newblock Zeros of {H}olant problems: Locations and algorithms.
\newblock {\em ACM Transactions on Algorithms}, 17(1):1--25, 2021.

\bibitem[HLZ16]{HLZ16}
Lingxiao Huang, Pinyan Lu, and Chihao Zhang.
\newblock Canonical paths for {MCMC}: {F}rom art to science.
\newblock In {\em Proceedings of the twenty-seventh annual ACM-SIAM symposium
  on discrete algorithms (SODA)}, pages 514--527. SIAM, 2016.

\bibitem[Jer03]{Jbook}
Mark Jerrum.
\newblock {\em Counting, Sampling and Integrating: Algorithms and Complexity}.
\newblock Lectures in Mathematics, ETH Zürich. Birkhäuser Basel, 2003.

\bibitem[JS89]{JS89}
Mark Jerrum and Alistair Sinclair.
\newblock Approximating the permanent.
\newblock {\em SIAM journal on computing}, 18(6):1149--1178, 1989.

\bibitem[JS93]{JS93}
Mark Jerrum and Alistair Sinclair.
\newblock Polynomial-time approximation algorithms for the {I}sing model.
\newblock {\em SIAM Journal on Computing}, 22(5):1087--1116, 1993.

\bibitem[JSV04]{JSV04}
Mark Jerrum, Alistair Sinclair, and Eric Vigoda.
\newblock A polynomial-time approximation algorithm for the permanent of a
  matrix with nonnegative entries.
\newblock {\em Journal of the ACM (JACM)}, 51(4):671--697, 2004.

\bibitem[LLL14]{LLL14}
Chengyu Lin, Jingcheng Liu, and Pinyan Lu.
\newblock A simple {FPTAS} for counting edge covers.
\newblock In {\em Proceedings of the 25th Annual ACM-SIAM Symposium on Discrete
  Algorithms (SODA)}, pages 341--348, 2014.

\bibitem[LLZ14]{LLZ14}
Jingcheng Liu, Pinyan Lu, and Chihao Zhang.
\newblock {FPTAS} for counting weighted edge covers.
\newblock In {\em Proceedings of the 22nd Annual European Symposium on
  Algorithms (ESA)}, pages 654--665, 2014.

\bibitem[LR19]{LR19}
Jonathan~D Leake and Nick~R Ryder.
\newblock Generalizations of the matching polynomial to the multivariate
  independence polynomial.
\newblock {\em Algebraic Combinatorics}, 2(5):781--802, 2019.

\bibitem[LSS19]{LSS19}
Jingcheng Liu, Alistair Sinclair, and Piyush Srivastava.
\newblock The {I}sing partition function: Zeros and deterministic
  approximation.
\newblock {\em Journal of Statistical Physics}, 174(2):287--315, 2019.

\bibitem[LWZ14]{LWZ14}
Pinyan Lu, Menghui Wang, and Chihao Zhang.
\newblock {FPTAS} for weighted {F}ibonacci gates and its applications.
\newblock In {\em Proceedings of the 41st International Colloquium on Automata,
  Languages and Programming (ICALP)}, pages 787--799. Springer, 2014.

\bibitem[Mat08]{Mat08}
James Matthews.
\newblock {\em Markov Chains for Sampling Matchings}.
\newblock PhD thesis, University of Edinburgh, 2008.

\bibitem[McQ13]{McQ13}
Colin McQuillan.
\newblock Approximating {H}olant problems by winding.
\newblock {\em ArXiv preprint, arXiv:1301.2880}, 2013.

\bibitem[PR17]{PR17}
Viresh Patel and Guus Regts.
\newblock Deterministic polynomial-time approximation algorithms for partition
  functions and graph polynomials.
\newblock {\em SIAM Journal on Computing}, 46(6):1893--1919, 2017.

\end{thebibliography}
\end{document}